\newcommand{\ignore}[1]{}
\newtheorem{theorem}{Theorem}[section]
\newtheorem{lemma}[theorem]{Lemma}
\newtheorem{corollary}[theorem]{Corollary}
\newcolumntype{C}[1]{>{\centering\let\newline\\\arraybackslash\hspace{0pt}}m{#1}}
\begin{document}
\title{\bf 2-Vertex Connectivity in Directed Graphs}
\author{Loukas Georgiadis$^{1}$ \and Giuseppe F. Italiano$^{2}$ \and Luigi Laura$^{3}$ \and Nikos Parotsidis$^{1}$}
\maketitle
\thispagestyle{empty}

\begin{abstract}
Given a directed graph, two vertices $v$ and $w$ are $2$-\emph{vertex-connected} if there are two internally vertex-disjoint paths from $v$ to $w$ and two internally vertex-disjoint paths from $w$ to $v$.
In this paper, we show how to compute this relation in $O(m+n)$ time, where $n$ is the number of vertices and $m$ is the number of edges of the graph. As a side result, we show how to build in linear time an $O(n)$-space data structure, which can answer in constant time queries on whether any two vertices are $2$-vertex-connected.
Additionally,  when two query vertices $v$ and $w$ are not $2$-vertex-connected, our data structure can produce in constant time a ``witness'' of this property, by exhibiting
a vertex or an edge that is contained in all paths from $v$ to $w$ or in all paths from $w$ to $v$.
We are also able to compute in linear time a sparse certificate for $2$-vertex connectivity, i.e., a subgraph of the input graph that has $O(n)$ edges and maintains the same $2$-vertex connectivity properties
as the input graph.
\end{abstract}

\footnotetext[1]{Department of Computer Science \& Engineering, University of Ioannina, Greece. E-mail: \texttt{\{loukas,nparotsi\}@cs.uoi.gr}.}
\footnotetext[2]{Dipartimento di Ingegneria Civile e Ingegneria Informatica, Universit\`a di Roma ``Tor Vergata'', Roma, Italy. E-mail: \texttt{giuseppe.italiano@uniroma2.it}.
Partially supported by MIUR, the Italian Ministry of Education, University and Research, under Project AMANDA
(Algorithmics for MAssive and Networked DAta).}
\footnotetext[3]{Dipartimento di Ingegneria Informatica, Automatica e Gestionale, ``Sapienza'' Universit\`a di Roma, Roma, Italy. E-mail: \texttt{laura@dis.uniroma1.it}.}

\section{Introduction}
\label{sec:introduction}

Let $G=(V,E)$ be a directed graph (digraph), with $m$ edges and $n$ vertices. $G$ is \emph{strongly connected} if there is a directed path from each vertex to every other vertex.
The \emph{strongly connected components} of $G$ are its maximal strongly connected subgraphs. Two vertices $u,v \in V$  are \emph{strongly connected}  if they belong to the same strongly connected component of $G$.
A vertex (resp., an edge) of $G$ is a \emph{strong articulation point} (resp., a \emph{strong bridge}) if its removal increases the number of strongly connected components.
A digraph $G$ is $2$-vertex-connected if it has at least three vertices and no strong articulation points; $G$ is $2$-edge-connected if it has no strong bridges. The $2$-vertex- (resp., $2$-edge-) connected components of $G$ are its maximal $2$-vertex- (resp., $2$-edge-) connected subgraphs.

\begin{figure}
\begin{center}
\includegraphics[width=\textwidth]{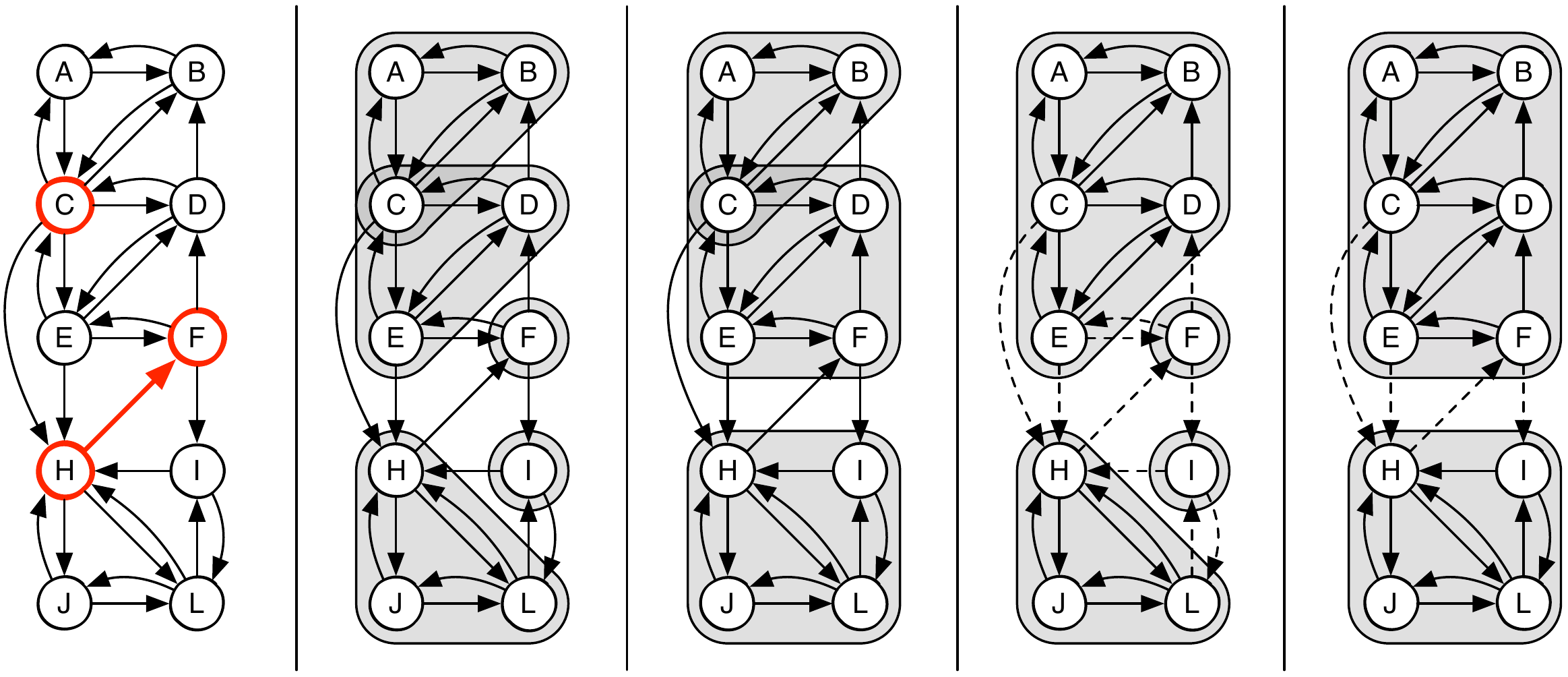}
\begin{tabular}{C{0.0\textwidth}C{0.10\textwidth}C{0.25\textwidth}C{0.15\textwidth}C{0.20\textwidth}C{0.15\textwidth}C{0.20\textwidth}}
\\[-.6cm]
& (a) $G$ &  (b) $2\mathit{VCC}(G)$  & (c) $2\mathit{VCB}(G)$ &  (d) $2\mathit{ECC}(G)$ &  (e) $2\mathit{ECB}(G)$ &
\end{tabular}
\end{center}
\vspace{-0.6cm}
\caption{(a) A strongly connected digraph $G$, with strong articulation points and strong bridges shown in red (better viewed in color). (b) The $2$-vertex-connected components of $G$. (c) The $2$-vertex-connected blocks of $G$. (d) The $2$-edge-connected components of $G$. (e) The $2$-edge-connected blocks of $G$.
}
\label{fig:example}
\end{figure}

Differently from undirected graphs, in digraphs $2$-vertex and $2$-edge connectivity have a much richer and more complicated structure.
To see an example of this,
let $v$ and $w$ be two distinct vertices and
consider the following natural $2$-vertex and $2$-edge connectivity relations, defined in~\cite{2ECB,2vcb:jaberi14,strong-k-blocks:rs81}.
Vertices $v$ and $w$ are said to be \emph{$2$-vertex-connected} (resp., \emph{$2$-edge-connected}), and we denote this relation by  $v \leftrightarrow_{\mathrm{2v}} w$ (resp., $v \leftrightarrow_{\mathrm{2e}} w$), if there are two internally vertex-disjoint  (resp., two edge-disjoint) directed paths from $v$ to $w$  and two internally vertex-disjoint  (resp., two edge-disjoint) directed paths from $w$ to $v$ (note that a path from $v$ to $w$ and a path from $w$ to $v$ need not be edge- or vertex-disjoint).
A \emph{$2$-vertex-connected block} (resp., \emph{$2$-edge-connected block}) of a digraph $G=(V,E)$ is defined as a maximal subset $B \subseteq V$ such that $u \leftrightarrow_{\mathrm{2v}} v$ (resp., $u \leftrightarrow_{\mathrm{2e}} v$) for all $u, v \in B$.
In undirected graphs, the  $2$-vertex- (resp., $2$-edge-) connected blocks are identical to the $2$-vertex- (resp., $2$-edge-) connected components.
As shown in Figure~\ref{fig:example}, this is not the case for digraphs.
Put in other words, differently from the undirected case, in digraphs $2$-vertex- (resp., $2$-edge-) connected components do not encompass the notion of pairwise $2$-vertex (resp., $2$-edge) connectivity among its vertices. We note that pairwise $2$-connectivity is relevant in several applications, where one is often interested in local properties, e.g., checking whether two vertices are $2$-connected, rather than in global properties.

It is thus not surprising that
$2$-connectivity problems on directed graphs
appear to be more difficult than on undirected graphs.
For undirected graphs
it has been known for over 40 years how to compute all bridges, articulation points, $2$-edge- and $2$-vertex-connected components in linear time,
by simply using depth first search~\cite{dfs:t}.
In the case of digraphs, however, the very same problems have been  much more challenging.
Indeed,
it has been shown only few years ago that all strong bridges and strong articulation points of a digraph can be computed in linear time~\cite{Italiano2012}. Furthermore,
the best current bound for computing the $2$-edge- and the $2$-vertex-connected components in digraphs is not even linear, but it is $O(n^2)$, and it was achieved only very recently by Henzinger et al.~\cite{2CC:HenzingerKL14},
improving previous $O(mn)$ time bounds~\cite{2vcc:jaberi14,ni93}.
Finally, it was shown also very recently how to compute the $2$-edge-connected blocks of digraphs in linear time \cite{2ECB}.

In this paper, we complete the picture on $2$-connectivity for digraphs by presenting the 
first algorithm for computing
the $2$-vertex-connected
blocks in $O(m+n)$ time. Our bound is
asymptotically optimal and it improves sharply over a previous $O(mn)$ time bound by  Jaberi \cite{2vcb:jaberi14}.
As a side result, our algorithm constructs an $O(n)$-space data structure
that reports in constant time if two vertices are $2$-vertex-connected.
Additionally,  when two query vertices $v$ and $w$ are not $2$-vertex-connected, our data structure can produce, in constant time, a ``witness'' 
by exhibiting
a vertex (i.e., a strong articulation point) or an edge (i.e., a strong bridge) that separates them.
We are also able to compute in linear time a sparse certificate for $2$-vertex connectivity, i.e., a subgraph of the input graph that has $O(n)$ edges and maintains the same $2$-vertex connectivity properties.
%
%
Our algorithm follows the high-level approach of \cite{2ECB} for computing the $2$-edge-connected blocks.
%
However, the algorithm for computing the $2$-vertex-connected blocks is much more involved and requires several novel  ideas and non-trivial techniques to achieve the claimed bounds.
In particular, the main technical difficulties  that need to be tackled when following the approach of \cite{2ECB} are the following:
\begin{itemize}
\item First, the algorithm in \cite{2ECB} maintains a partition of the vertices into approximate blocks, and refines this partition as the algorithm progresses.
Unlike $2$-edge-connected blocks, however,
$2$-vertex-connected blocks do not partition the vertices of a digraph, and therefore it is harder to maintain approximate blocks throughout the algorithm's execution. To cope with this problem, we show that these blocks can be maintained using a more complicated forest representation, and we define a set of suitable operations on this representation in order to refine and split blocks. We believe that our forest representation of the $2$-vertex-connected blocks of a digraph can be of independent interest.

\item Second, in \cite{2ECB} we used
a properly defined \emph{canonical decomposition} of the input digraph $G$, in order to obtain smaller \emph{auxiliary} digraphs (not
necessarily subgraphs of $G$) that maintain the original $2$-edge-connected blocks of $G$.
A key property of this decomposition was the fact that any vertex in an auxiliary graph $G_r$ is reachable from a vertex outside $G_r$ only through a single strong bridge. In the computation of the
$2$-vertex-connected blocks, we have to decompose the graph according to strong articulation points, and so the above crucial property is completely lost. To overcome this problematic issue, we need to design and to implement efficiently a different and more sophisticated decomposition.

\item Third, differently from $2$-edge connectivity, $2$-vertex connectivity in digraphs is plagued with several degenerate special cases, which are not only more tedious but also more cumbersome to deal with. For instance, the algorithm in \cite{2ECB} exploits implicitly the property that
two vertices $v$ and $w$ are $2$-edge-connected if and only if the removal of any edge leaves $v$ and $w$ in the same strongly connected component. Unfortunately, this property no longer holds for $2$-vertex connectivity, as for instance two mutually adjacent vertices are always left in the same strongly connected component by the removal of any other vertex, but they are not necessarily $2$-vertex-connected. To handle this more complicated situation, we introduce the notion of \emph{vertex-resilient blocks} and
prove some useful properties about the vertex-resilient and $2$-vertex-connected blocks of a digraph.
\end{itemize}

Another difference with \cite{2ECB} is that now we are able to provide a witness for two vertices not being $2$-vertex-connected. This approach can be applied to provide a witness for two vertices not being $2$-edge-connected, thus extending the result in \cite{2ECB}.
As in \cite{2ECB}, some basic components of our algorithms are flow graphs and dominator trees, that we review in Section \ref{sec:dominators}. In Section \ref{sec:blocks} we prove some useful properties of the vertex-resilient and $2$-vertex-connected blocks that allow us to represent them by a forest. Our linear-time algorithms for computing the vertex-resilient blocks and the $2$-vertex-connected blocks are described in Sections \ref{section:vertex-resilient-blocks} and \ref{section:2vc-blocks}. We describe the computation of the sparse certificate in Section \ref{section:sparse-certificate}.

\section{Flow graphs, dominators, and bridges}
\label{sec:dominators}

A \emph{flow graph} is a digraph such that every vertex is reachable from a distinguished start vertex. Let $G=(V,E)$ be the input digraph, which we assume to be strongly connected. (If not, we simply treat each strongly connected component separately.) For any vertex $s \in V$, we denote by $G(s)=(V,E,s)$ the corresponding flow graph with start vertex $s$; all vertices in $V$ are reachable from $s$ since $G$ is strongly connected. The \emph{dominator relation} in $G(s)$
is defined as follows: A vertex $u$ is a \emph{dominator} of a vertex $w$ ($u$ \emph{dominates} $w$) if every path from $s$ to $w$ contains $u$; $u$ is a \emph{proper dominator} of $w$ if $u$ dominates $w$ and $u \not= w$.  The dominator relation is reflexive and transitive. Its transitive reduction is a rooted tree, the \emph{dominator tree} $D(s)$: $u$ dominates $w$ if and only if $u$ is an ancestor of $w$ in $D(s)$. If $w \not= s$, $d(w)$, the parent of $w$ in $D(s)$, is the \emph{immediate dominator} of $w$: it is the unique proper dominator of $w$ that is dominated by all proper dominators of $w$. An edge $(u,w)$ is a \emph{bridge} in $G(s)$ if all paths from $s$ to $w$ include $(u,w)$.

Lengauer and Tarjan~\cite{domin:lt} presented an algorithm for computing dominators in  $O(m \alpha(n, m/n))$ time for a flow graph with $n$ vertices and $m$ edges, where $\alpha$ is a functional inverse of Ackermann's function~\cite{dsu:tarjan}.
Subsequently, several linear-time algorithms
were discovered~\cite{domin:ahlt,dominators:bgkrtw,domin:bkrw,dominators:Fraczak2013,minset:poset,dom:gt04}.
Italiano et al. \cite{Italiano2012} showed that the strong articulation points of $G$ can be computed from the dominator trees of $G(s)$ and $G^R(s)$, where $s$ is an arbitrary start vertex and $G^R$ is the digraph that results from $G$ after reversing edge directions; similarly, the strong bridges of $G$ correspond to the bridges of $G(s)$ and $G^R(s)$.

Let $T$ be a rooted tree whose vertex set is $V$. Tree $T$ has the \emph{parent property} if for all $(v, w) \in E$, $v$ is a descendant of the parent of $w$ in $T$.
Tree $T$ has the \emph{sibling property} if $v$ does not dominate $w$ for all siblings $v$ and $w$. The parent and sibling properties are necessary and sufficient for a tree to be the dominator tree~\cite{domcert}.

\ignore{
\begin{theorem}
\label{theorem:parent-sibling} \emph{(\cite{domcert})}
Tree $D$ has the parent and sibling properties.
\end{theorem}
}

\section{Vertex-resilient blocks and $2$-vertex-connected blocks}
\label{sec:blocks}

Let $v$ and $w$ be two distinct vertices in a digraph.
By Menger's Theorem~\cite{menger}, $v \leftrightarrow_{\mathrm{2e}} w$ if and only if
the removal of any edge leaves $v$ and $w$ in the same strongly connected component, i.e., two vertices are $2$-edge-connected if and only if they are resilient to the deletion of a single edge. The situation for $2$-vertex connectivity is more complicated.
Indeed, Menger's Theorem implies that $v \leftrightarrow_{\mathrm{2v}} w$ only if the removal of any vertex different from $v$ and $w$ leaves them in the same strongly connected component, while the converse holds only when $v$ and $w$ are not adjacent. For instance,
two mutually adjacent vertices are left in the same strongly connected component by the removal of any other vertex, although they are not necessarily $2$-vertex-connected. To handle this situation,
we use the following notation, which was also considered in~\cite{2vcb:jaberi14}.
Vertices $v$ and $w$ are said to be \emph{vertex-resilient},
denoted by $v \leftrightarrow_{\mathrm{vr}} w$
if the removal of any vertex different from $v$ and $w$
leaves $v$ and $w$ in the same strongly connected component. We define a \emph{vertex-resilient block}
of a digraph $G=(V,E)$ as a maximal subset $B \subseteq V$ such that $u \leftrightarrow_{\mathrm{vr}} v$
for all $u, v \in B$. See Figure \ref{fig:vrb-example}.
%
Note that, as a (degenerate) special case, a vertex-resilient block might consist of a singleton vertex only: we denote this as a \emph{trivial vertex-resilient block}. In the following, we will consider only non-trivial vertex-resilient blocks. Since there is no danger of ambiguity, we will call them simply vertex-resilient blocks. We remark that
two vertices $v$ and $w$ that are vertex-resilient are not necessarily $2$-vertex-connected: this is indeed the case for vertices $H$ and $F$ in the digraph of Figure \ref{fig:example}(a).  If, however, $v$ and $w$ are not adjacent then $v \leftrightarrow_{\mathrm{2v}} w$ if and only if  $v \leftrightarrow_{\mathrm{vr}} w$.

\begin{figure}
\begin{center}
\includegraphics[width=0.22\textwidth]{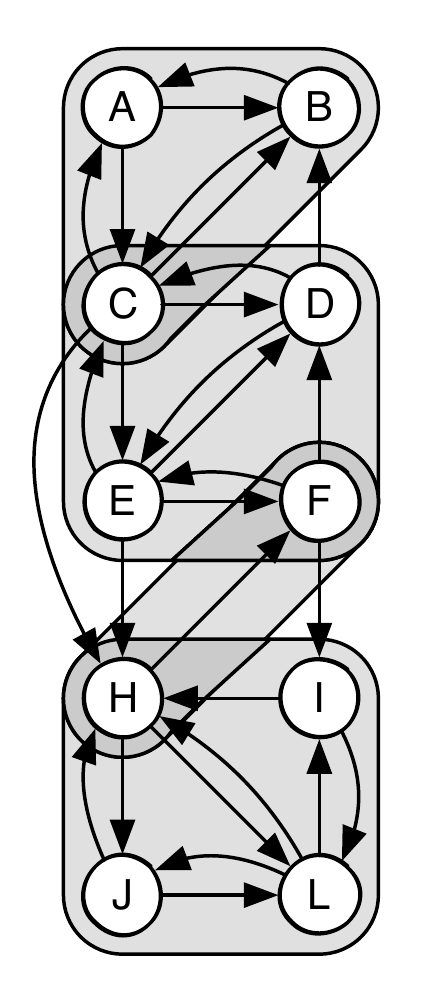}
\end{center}
\vspace{-0.6cm}
\caption{The vertex-resilient blocks of the digraph of Figure \ref{fig:example}.}
\label{fig:vrb-example}
\end{figure}

We next provide some basic properties of the vertex-resilient blocks and the $2$-vertex-connected blocks. In particular, we show that any digraph has at most $n-1$ vertex-resilient (resp., $2$-vertex-connected) blocks and, moreover, that there is a forest representation of these blocks that enables us to test vertex-resilience (resp., $2$-vertex-connectivity) between any two vertices in constant time. This structure is reminiscent of the representation used in \cite{onlineBiconnected:WT92} for the biconnected components of an undirected graph.

\begin{lemma}
\label{lemma:VR}
Let $u$, $v$, $x$, and $y$ be distinct vertices such that $u{\leftrightarrow_{\mathrm{vr}}}x$ , $v{\leftrightarrow_{\mathrm{vr}}}x$, $u{\leftrightarrow_{\mathrm{vr}}}y$ and $v{\leftrightarrow_{\mathrm{vr}}}y$. Then also $x {\leftrightarrow_{\mathrm{vr}}} y$ and $u {\leftrightarrow_{\mathrm{vr}}} v$.
\end{lemma}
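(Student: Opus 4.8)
The plan is to work directly from the definition of vertex-resilience. Recall that $a \leftrightarrow_{\mathrm{vr}} b$ means that for every vertex $z \notin \{a,b\}$, the vertices $a$ and $b$ lie in the same strongly connected component of $G - z$. The crucial auxiliary fact I will use is that, for a fixed removed vertex $z$, the relation ``lies in the same strongly connected component of $G - z$'' is an equivalence relation on $V \setminus \{z\}$, and in particular it is transitive. So to establish $x \leftrightarrow_{\mathrm{vr}} y$ it suffices to fix an arbitrary $z \notin \{x,y\}$ and exhibit a vertex lying in the same strong component of $G - z$ as both $x$ and $y$.

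First I would prove $x \leftrightarrow_{\mathrm{vr}} y$ by a two-case analysis on the removed vertex $z \notin \{x,y\}$. If $z \neq u$, then since $z \notin \{u,x\}$ the hypothesis $u \leftrightarrow_{\mathrm{vr}} x$ places $u$ and $x$ in the same strong component of $G - z$, and since $z \notin \{u,y\}$ the hypothesis $u \leftrightarrow_{\mathrm{vr}} y$ places $u$ and $y$ in the same strong component of $G - z$; transitivity then puts $x$ and $y$ in the same strong component of $G - z$. If instead $z = u$, then because $u,v,x,y$ are distinct we have $z \neq v$, so $z \notin \{v,x\}$ and $z \notin \{v,y\}$, and the hypotheses $v \leftrightarrow_{\mathrm{vr}} x$ and $v \leftrightarrow_{\mathrm{vr}} y$ together with transitivity again place $x$ and $y$ in the same strong component of $G - z$. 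Since $z$ was arbitrary, $x \leftrightarrow_{\mathrm{vr}} y$.

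The claim $u \leftrightarrow_{\mathrm{vr}} v$ follows by the identical argument after observing that the four hypotheses are symmetric under exchanging the pair $\{u,v\}$ with the pair $\{x,y\}$: we are given precisely the four ``cross'' relations $u \leftrightarrow_{\mathrm{vr}} x$, $u \leftrightarrow_{\mathrm{vr}} y$, $v \leftrightarrow_{\mathrm{vr}} x$, and $v \leftrightarrow_{\mathrm{vr}} y$. Concretely, for an arbitrary $z \notin \{u,v\}$ I would use $x$ as the intermediate vertex when $z \neq x$, and $y$ as the intermediate vertex when $z = x$ (in which case $z \neq y$ by distinctness), in each case invoking two of the cross relations together with transitivity.

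The only real obstacle is the degenerate possibility that the removed vertex $z$ coincides with the intended intermediate vertex, which would make the relevant hypothesis inapplicable, since the definition of vertex-resilience excludes removing an endpoint. This is exactly what forces the case split, and it is resolved by the hypothesis that $u,v,x,y$ are four distinct vertices: a single removed $z$ can equal at most one of $u$ and $v$ (respectively, at most one of $x$ and $y$), so the other member of the pair is always a legitimate intermediate vertex. Everything else is routine.
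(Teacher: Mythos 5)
Your proof is correct and takes essentially the same approach as the paper: both arguments hinge on choosing an intermediate vertex among $\{u,v\}$ (resp.\ $\{x,y\}$) that differs from the removed vertex --- which distinctness guarantees is always possible --- and then chaining the two corresponding vertex-resilience hypotheses. The only difference is presentational: the paper argues by contradiction, concatenating two explicit paths that avoid the hypothesized separating vertex, while you argue directly via transitivity of the same-strongly-connected-component relation in $G \setminus z$.
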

\begin{proof}
Assume, for contradiction, that $x$ and $y$ are not vertex-resilient. Then there is a strong articulation point $w$ such that every path from $y$ to $x$ contains $w$, or every path from $x$ to $y$ contains $w$ (or both). Without loss of generality, suppose that $w$ is contained in every path from $y$ to $x$.
Since $u$ and $v$ are distinct, we can assume that $w \not= u$. (If $w = u$ then we swap the role of $u$ and $v$.) Then, $y {\leftrightarrow_{\mathrm{vr}}} u$ implies that there is a path $P$ from $y$ to $u$ that avoids $w$, and similarly, $u {\leftrightarrow_{\mathrm{vr}}} x$ implies that there is a path $Q$ from $u$ to $x$ that avoids $w$.
So, $P$ followed by $Q$ gives a path from $y$ to $x$ that does not contain $w$, a contradiction. Hence $x {\leftrightarrow_{\mathrm{vr}}} y$. The fact that $u {\leftrightarrow_{\mathrm{vr}}} v$ follows by repeating the same argument for $u$ and $v$.
\end{proof}

\begin{corollary}
\label{corollary:VRB}
Let $B$ and $B'$ be two distinct vertex-resilient blocks of a digraph $G=(V,E)$. Then $|B \cap B'| \le 1$.
\end{corollary}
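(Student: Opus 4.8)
The corollary follows directly from Lemma~\ref{lemma:VR} by a short contradiction argument.

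The plan is to assume, for contradiction, that two distinct vertex-resilient blocks $B$ and $B'$ share at least two common vertices. Let $x, y \in B \cap B'$ be two such distinct vertices. Since $B$ and $B'$ are distinct maximal sets and neither contains the other (maximality would force equality otherwise), I can pick a vertex $u \in B \setminus B'$ and a vertex $v \in B' \setminus B$. The aim is to show that $u$ and $v$ must then be vertex-resilient, which will contradict the maximality of the blocks, since $B \cup \{v\}$ (or equivalently the set $B \cup B'$) would then be a larger set of pairwise vertex-resilient vertices.

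First I would record the four vertex-resilience relations that are immediately available: because $x, y \in B$ and $u \in B$, we have $u \leftrightarrow_{\mathrm{vr}} x$ and $u \leftrightarrow_{\mathrm{vr}} y$; because $x, y \in B'$ and $v \in B'$, we have $v \leftrightarrow_{\mathrm{vr}} x$ and $v \leftrightarrow_{\mathrm{vr}} y$. These are exactly the four hypotheses of Lemma~\ref{lemma:VR} with the roles of $u, v, x, y$ matched up, provided the four vertices $u, v, x, y$ are genuinely distinct. Distinctness of $x$ and $y$ is given, and $u \notin B'$ while $v \in B'$, $x, y \in B'$ force $u \neq v$, $u \neq x$, $u \neq y$; symmetrically $v \notin B$ gives $v \neq x$, $v \neq y$. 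Applying Lemma~\ref{lemma:VR} then yields $u \leftrightarrow_{\mathrm{vr}} v$.

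Finally I would derive the contradiction: since $u \leftrightarrow_{\mathrm{vr}} v$, and $u$ is vertex-resilient to every vertex of $B'$ while $v$ is vertex-resilient to every vertex of $B$ (again by Lemma~\ref{lemma:VR} applied across the shared pair $x, y$), the union $B \cup B'$ is a set in which every pair of vertices is vertex-resilient. This strictly contains $B$, contradicting the maximality of $B$ as a vertex-resilient block. Hence $|B \cap B'| \le 1$.

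The main obstacle — really the only subtle point — is justifying that we can extract the ``off-block'' vertices $u$ and $v$ with the right distinctness, and that pairwise vertex-resilience genuinely extends to the whole union $B \cup B'$ rather than just to the single pair $(u,v)$. The latter requires reapplying Lemma~\ref{lemma:VR} to arbitrary pairs drawn from $B$ and $B'$, each time using $x$ and $y$ as the two common ``bridge'' vertices; since $x, y$ witness vertex-resilience to every vertex in both blocks, the lemma applies uniformly, and the argument goes through without further complication.
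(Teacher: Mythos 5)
Your proof is correct and is exactly the argument the paper intends: the paper's own proof of Corollary~\ref{corollary:VRB} simply states that it follows immediately from Lemma~\ref{lemma:VR}, and your write-up (picking distinct $x,y \in B \cap B'$, off-block vertices $u \in B \setminus B'$, $v \in B' \setminus B$, checking distinctness, and applying the lemma across the shared pair to contradict maximality) is the standard fleshing-out of that one-line justification.
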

\begin{proof}
Follows immediately from Lemma \ref{lemma:VR}.
\end{proof}

We denote by $\mathit{VRB}(u)$ the vertex-resilient blocks that contain $u$.
Define the \emph{block graph} $F=(V_F, E_F)$ of $G$ as follows. The vertex set $V_F$ consists of the vertices in $V$ and also contains one \emph{block node} for each vertex-resilient block of $G$. The edge set $E_F$ consists of the edges $\{u, B\}$ where $B \in \mathit{VRB}(u)$. Thus, $F$ is an undirected bipartite graph. Next we show that it is also acyclic.

\begin{lemma}
\label{lemma:VB-graph-path}
Let $u$ and $v$ be any vertices that are connected by a path $P$ in $F$. Then, for any vertex $w \in V$ not on $P$, $u$ and $v$ are strongly connected in digraph $G \setminus w$.
\end{lemma}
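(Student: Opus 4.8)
The plan is to reduce the statement to the defining property of vertex-resilience, chained along the path $P$ via the transitivity of strong connectivity. Since $F$ is bipartite, with one side consisting of the vertices of $V$ and the other of the block nodes, any path $P$ from $u$ to $v$ must alternate between vertex nodes and block nodes. I would therefore begin by writing $P$ explicitly in the form $u = x_0, B_1, x_1, B_2, \ldots, B_k, x_k = v$, where each consecutive vertex pair $x_{i-1}, x_i$ is joined in $F$ to a common block node $B_i$. By the definition of $E_F$, this means that both $x_{i-1}$ and $x_i$ belong to the vertex-resilient block $B_i$, and hence $x_{i-1} \leftrightarrow_{\mathrm{vr}} x_i$ for every $i$.

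Next I would fix a vertex $w \in V$ that does not lie on $P$, so that in particular $w \neq x_i$ for every $i \in \{0, 1, \ldots, k\}$. For each consecutive pair, since $x_{i-1} \leftrightarrow_{\mathrm{vr}} x_i$ and $w$ differs from both $x_{i-1}$ and $x_i$, the definition of vertex-resilience guarantees that $x_{i-1}$ and $x_i$ remain in the same strongly connected component of $G \setminus w$. Because strong connectivity in the fixed digraph $G \setminus w$ is an equivalence relation, and in particular transitive, chaining these $k$ statements gives that $x_0$ and $x_k$, that is $u$ and $v$, are strongly connected in $G \setminus w$. Formally this is an induction on $k$: the base case $k = 0$ is trivial since then $u = v$, and the inductive step simply appends the single relation $x_{k-1} \leftrightarrow_{\mathrm{vr}} x_k$ to the hypothesis applied to the subpath $x_0, \ldots, x_{k-1}$.

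The only delicate point, and the single place where the hypothesis that $w$ is not on $P$ is actually used, is ensuring that $w$ differs from every endpoint $x_{i-1}, x_i$ of each segment; otherwise the vertex-resilience of $x_{i-1}$ and $x_i$ would tell us nothing about the graph $G \setminus w$. Since $P$ lists precisely the vertex nodes $x_0, \ldots, x_k$ among its vertices, excluding $w$ from $P$ excludes it from all of them at once, which is exactly what the chaining argument requires. I do not expect a genuine obstacle here: the bipartite structure of $F$ forces the clean alternating form of $P$, and everything else follows from the definition of $\leftrightarrow_{\mathrm{vr}}$ together with transitivity of strong connectivity.
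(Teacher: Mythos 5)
Your proof is correct and follows essentially the same route as the paper: decompose the alternating path in $F$, note that consecutive vertices on it are vertex-resilient and distinct from $w$, and chain the resulting segments. The only cosmetic difference is that the paper concatenates explicit $w$-avoiding paths in each direction, whereas you invoke transitivity of the strong-connectivity relation in $G \setminus w$ — these are the same argument.
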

\begin{proof}
It suffices to show that $G$ contains a path $Q$ from $u$ to $v$ that avoids $w$. The same argument shows that $G$ contains a path from $v$ to $u$ that avoids $w$.
Let $P=(u_1=u, B_1, u_2, B_2, \ldots, u_{k+1}=v)$. Then $u_i \leftrightarrow_{\mathrm{vr}} u_{i+1}$, for $1 \le i \le k$, so there is a path $P_i$ in $G$ from $v_i$ to $v_{i+1}$ that avoids $w$.
Then the catenation of paths $P_1, \ldots, P_{k}$ gives a path in $G$ from $u$ to $v$ that avoids $w$.
\end{proof}

\begin{lemma}
\label{lemma:VB-graph}
Graph $F$ is acyclic.
\end{lemma}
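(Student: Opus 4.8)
The plan is to argue by contradiction: assume $F$ contains a cycle and show that all the block nodes appearing on it must coincide, which is absurd since a simple cycle visits distinct nodes. Because $F$ is bipartite (vertex nodes versus block nodes) and simple, any cycle $C$ necessarily alternates sides and has the form $(u_1, B_1, u_2, B_2, \ldots, u_k, B_k, u_1)$ with $k \ge 2$, where $u_1, \ldots, u_k \in V$ are distinct vertices, $B_1, \ldots, B_k$ are distinct block nodes, and each $B_i$ contains both $u_i$ and $u_{i+1}$ (indices taken modulo $k$), since an edge of $F$ records membership of a vertex in a block.

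The crux is to show that all vertex nodes on $C$ are \emph{pairwise} vertex-resilient. Fix distinct $u_i, u_j$ on $C$ and an arbitrary vertex $w \in V \setminus \{u_i, u_j\}$. The two arcs of $C$ joining $u_i$ and $u_j$ are both paths in $F$, and since $C$ is simple every vertex node of $C$ other than $u_i, u_j$ lies on exactly one of the two arcs; hence at least one arc $P$ contains no vertex node equal to $w$. As $w \in V$ cannot be a block node, $w$ lies on neither the vertex-node nor the block-node part of $P$, so $w$ is not on $P$. Applying Lemma \ref{lemma:VB-graph-path} to $P$ then gives that $u_i$ and $u_j$ are strongly connected in $G \setminus w$. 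Since $w$ was arbitrary, $u_i \leftrightarrow_{\mathrm{vr}} u_j$. I expect this to be the main obstacle, precisely because of the case where the removed vertex $w$ is itself a vertex node of $C$: choosing the arc of the cycle that avoids $w$ is exactly what makes Lemma \ref{lemma:VB-graph-path} applicable, and this is the step that genuinely uses the cyclic (rather than merely path-like) structure.

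Having established that $\{u_1, \ldots, u_k\}$ is pairwise vertex-resilient, I would observe that any pairwise-vertex-resilient set extends to a maximal one, so this set is contained in a single vertex-resilient block $B^{\ast}$. Now each $B_i$ and $B^{\ast}$ both contain the two distinct vertices $u_i$ and $u_{i+1}$, whence $|B_i \cap B^{\ast}| \ge 2$; by Corollary \ref{corollary:VRB} this forces $B_i = B^{\ast}$ for every $i$. Thus $B_1 = \cdots = B_k = B^{\ast}$, contradicting the fact that the block nodes on the simple cycle $C$ are distinct (recall $k \ge 2$). Hence $F$ contains no cycle and is acyclic.
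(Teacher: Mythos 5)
Your proof is correct, and while it rests on the same two ingredients as the paper's proof --- Lemma \ref{lemma:VB-graph-path} applied to a portion of the cycle avoiding a given vertex, and Corollary \ref{corollary:VRB} to identify blocks sharing two vertices --- the route is genuinely different. The paper takes a \emph{minimal} cycle, argues that some cycle vertex $w$ lies outside a block node $B$ of the cycle, extracts (from maximality of $B$) a strong articulation point $x$ separating some $z \in B$ from $w$, and then refutes the existence of $x$ by concatenating a path from $z$ to $u$ (resilience inside $B$) with a path from $u$ to $w$ (Lemma \ref{lemma:VB-graph-path} on the sub-path of the cycle that misses $x$). You instead prove the positive statement head-on: any two vertex nodes $u_i, u_j$ on the cycle are vertex-resilient, since for every candidate separator $w$ at least one of the two arcs of the cycle between $u_i$ and $u_j$ avoids $w$, and Lemma \ref{lemma:VB-graph-path} then keeps them strongly connected in $G \setminus w$. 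From there, all cycle vertices sit in a single maximal block $B^{\ast}$, each block node $B_i$ on the cycle shares two distinct vertices with $B^{\ast}$, hence $B_i = B^{\ast}$ for all $i$ by Corollary \ref{corollary:VRB}, contradicting the distinctness of the ($\ge 2$) block nodes on a simple cycle. Your version dispenses with the minimal-cycle assumption, the without-loss-of-generality case analysis, and the explicit path concatenation, at the small cost of the (easy, but worth stating) observation that a pairwise vertex-resilient set extends to a maximal block. Both arguments ultimately exploit the same structural fact --- a cycle supplies two internally disjoint routes in $F$ between any two of its vertices --- but yours packages it more cleanly and proves the stronger intermediate claim that the paper only announces in its opening sentence.
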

\begin{proof}
Suppose, for contradiction, that $F$ contains a cycle $C$. We show that all vertices $w \in C \cap V$ belong to the same vertex-resilient block $B$.
Let $u, v \in V$ be two vertices on a minimal cycle $C$ of $F$ that are adjacent to a block node $B$. (Such $u$, $v$, and $B$ exist since $F$ is bipartite.) Then, $u$ and $v$ cannot be the only vertices in $V$ that are on $C$, since otherwise they would be adjacent to another block $B'$ on $C$, violating Corollary \ref{corollary:VRB}. Therefore, $C$ contains a vertex $w \in V \setminus \{u, v\}$. Clearly, $w \notin B$, otherwise the edge $\{ w, B\}$ would exist contradicting the minimality of $C$.
Hence, there is a vertex $z \in B$ such that all paths from $z$ to $w$ contain a common strong articulation point or all paths from $w$ to $z$ contain a common strong articulation point. Suppose, without loss of generality, that a vertex $x$ is contained in every path from $z$ to $w$.  Let $P$ be the path that results from $C$ by removing $B$. Let
$P_u$ and $P_v$ be the subpaths of $P$ from $u$ to $w$ and from $v$ to $w$, respectively. Then $x \not\in P_u$ or  $x \not\in P_v$ (or both). Suppose $x \not\in P_u$; if not then swap the role of $u$ and $v$.
Then, by Lemma \ref{lemma:VB-graph-path} there is a path $Q$ in $G$ from $u$ to $w$ that avoids $x$.
Also, since $u \leftrightarrow_{\mathrm{vr}} z$, there is a path $Q'$ in $G$ from $z$ to $u$ that avoids $x$.
Then the catenation of $Q'$ and $Q$ gives a path in $G$ from $z$ to $w$ that avoids $x$, a contradiction.
\end{proof}

\begin{lemma}
\label{lemma:blocks-number}
The number of vertex-resilient blocks in a digraph $G$ is at most $n-1$ .
\end{lemma}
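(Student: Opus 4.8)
The plan is to exploit the forest structure of the block graph $F$ established in Lemma~\ref{lemma:VB-graph}. Since $F$ is undirected and acyclic, it is a forest; hence if $F$ has $N$ vertices and $c$ connected components, it has exactly $N-c \le N-1$ edges. The whole argument then reduces to a double-counting of $|E_F|$: an upper bound from the forest property and a lower bound from the degrees of the block nodes.

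First I would set up the counting. Let $b$ denote the number of (non-trivial) vertex-resilient blocks of $G$. By construction, $V_F$ consists of the $n$ original vertices of $V$ together with one block node per block, so $|V_F| = n + b$. Applying the forest bound gives the upper estimate $|E_F| \le |V_F| - 1 = n + b - 1$.

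Next I would bound $|E_F|$ from below by counting incidences at the block nodes. Each block node $B$ is joined in $F$ exactly to the vertices belonging to the block $B$, so its degree equals $|B|$. Because we consider only non-trivial blocks, every block has at least two vertices, and therefore every block node has degree at least $2$. Since $F$ is bipartite with the original vertices on one side and the block nodes on the other, each edge of $F$ is incident to exactly one block node, so $|E_F| = \sum_{B} |B| \ge 2b$. Combining the two estimates yields $2b \le |E_F| \le n + b - 1$, which rearranges to $b \le n-1$, as claimed.

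There is no genuine obstacle at this stage: once $F$ is known to be a forest, the conclusion is a short double-counting argument. The one place that requires care is the restriction to non-trivial blocks, which is precisely what guarantees that each block node has degree at least $2$ and hence that the lower bound $|E_F| \ge 2b$ holds. The substantive ingredient, namely the acyclicity of $F$, has already been supplied by Lemma~\ref{lemma:VB-graph}, so this final step is purely combinatorial.
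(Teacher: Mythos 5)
Your proof is correct, and it reaches the bound by a different counting mechanism than the paper's. Both arguments rest on the same two ingredients — the acyclicity of $F$ (Lemma~\ref{lemma:VB-graph}) and the fact that every non-trivial block node has degree at least $2$ — but the paper then roots each tree of $F$, uses bipartiteness to note that levels alternate between $V$-vertices and block nodes and that all leaves lie in $V$, and partitions each tree into vertex-disjoint paths leading to leaf descendants, comparing block nodes against $V$-vertices path by path (with a strict inequality on the path through the root). You instead run a single global double count: the forest bound $|E_F| \le |V_F| - 1 = n + b - 1$ against the incidence count $|E_F| = \sum_{B}|B| \ge 2b$, giving $b \le n-1$ at once. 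Your route is arguably cleaner, and it buys something extra: the identity $|E_F| = \sum_{B}|B|$ combined with $|E_F| \le n + b - 1 \le 2n-2$ is precisely the content of Lemma~\ref{lemma:blocks-size}, which the paper proves separately after this lemma, so your double count establishes both statements simultaneously. You also correctly flag the one delicate point, namely that the lower bound $|E_F| \ge 2b$ is where the restriction to non-trivial blocks is used.
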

\begin{proof}
We prove the lemma by showing that forest $F$ contains at most $n-1$ block nodes.
Since $F$ is a forest we can root each tree $T$ of $F$ at some arbitrary vertex $r$. Every level of $T$ contains either only vertices of $V$ or only block nodes, because $F$ is bipartite. Moreover, every block node is adjacent to at least two vertices of $V$, due to the fact that each (non-trivial) vertex-resilient block in $G$ contains at least $2$ vertices. Hence, every leaf of $T$ is a vertex in $V$.
Now consider a partition of $T$ into vertex disjoint paths $P_1, P_2, \ldots, P_k$, such that each $P_i$ leads from some vertex or block node to a leaf descendant. The number of block nodes in each $P_i$ is at most equal to $|P_i \cap V|$. Also, in the path $P_i$ starting at $r$ the number of block nodes in $P_i$ is less than $|P_i \cap V|$.  We conclude that there at most $n-1$ block nodes in $F$.
\end{proof}

\begin{lemma}
\label{lemma:blocks-size}
The total number of vertices in all vertex-resilient blocks is at most $2n-2$.
\end{lemma}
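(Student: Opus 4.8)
The plan is to count the total number of vertices across all blocks by reinterpreting it as an edge count in the block graph $F$. The key observation is that, by construction, the degree of a block node $B$ in $F$ is exactly $|B|$: the edges incident to $B$ are precisely the edges $\{u, B\}$ with $u \in B$. Hence the quantity we want to bound, namely $\sum_{B} |B|$ summed over all (non-trivial) vertex-resilient blocks, equals $\sum_{B} \deg_F(B)$. Since $F$ is bipartite with one side consisting of the vertices in $V$ and the other side consisting of the block nodes, every edge of $F$ has exactly one endpoint that is a block node. Therefore $\sum_{B} \deg_F(B)$ is simply the total number of edges of $F$.

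Next I would invoke the structural facts already established. By Lemma \ref{lemma:VB-graph}, $F$ is acyclic, so it is a forest, and a forest on $N$ vertices has at most $N-1$ edges. The vertex set $V_F$ has size $n + b$, where $b$ denotes the number of block nodes, so $F$ has at most $n + b - 1$ edges. Combining this with Lemma \ref{lemma:blocks-number}, which gives $b \le n-1$, yields at most $n + (n-1) - 1 = 2n-2$ edges. Chaining these equalities and inequalities gives
\[
\sum_{B} |B| \;=\; \sum_{B} \deg_F(B) \;=\; |E_F| \;\le\; n + b - 1 \;\le\; 2n - 2,
\]
which is exactly the claimed bound.

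I do not anticipate a genuine obstacle here, since the statement follows from a short counting argument once the two preceding lemmas are in hand; the only point requiring care is the identification of $\sum_B |B|$ with the edge count of $F$, which relies on the bipartite structure so that no edge is counted on the wrong side. One could note that the bound is in fact slightly stronger: if $F$ has $c \ge 1$ connected components then $|E_F| = (n+b) - c \le 2n - 1 - c$, so the estimate $2n-2$ is attained only in degenerate configurations and is otherwise improved by the number of components, but for the purposes of this lemma the stated bound $2n-2$ suffices.
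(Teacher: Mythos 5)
Your proof is correct and follows essentially the same route as the paper's: both identify the total number of block memberships with the number of edges of the block forest $F$ (each occurrence of a vertex $v$ in a block $B$ being the edge $\{v,B\}$), and then bound $|E_F|$ using acyclicity (Lemma \ref{lemma:VB-graph}) together with the bound of at most $n-1$ block nodes (Lemma \ref{lemma:blocks-number}). Your version merely spells out the degree-counting and bipartiteness details that the paper leaves implicit.
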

\begin{proof}
By Lemmas \ref{lemma:VB-graph} and \ref{lemma:blocks-number}, the block graph $F$ is a forest with at most $2n-1$ vertices. Each occurrence of a vertex $v$ in a block $B$ corresponds to an edge $\{v,B\}$ of $F$. Therefore, the total number of vertices in all vertex-resilient blocks equals the number of edges in $F$, and the lemma follows.
\end{proof}

\begin{figure}[t!]
\begin{center}
\includegraphics[width=0.43\textwidth]{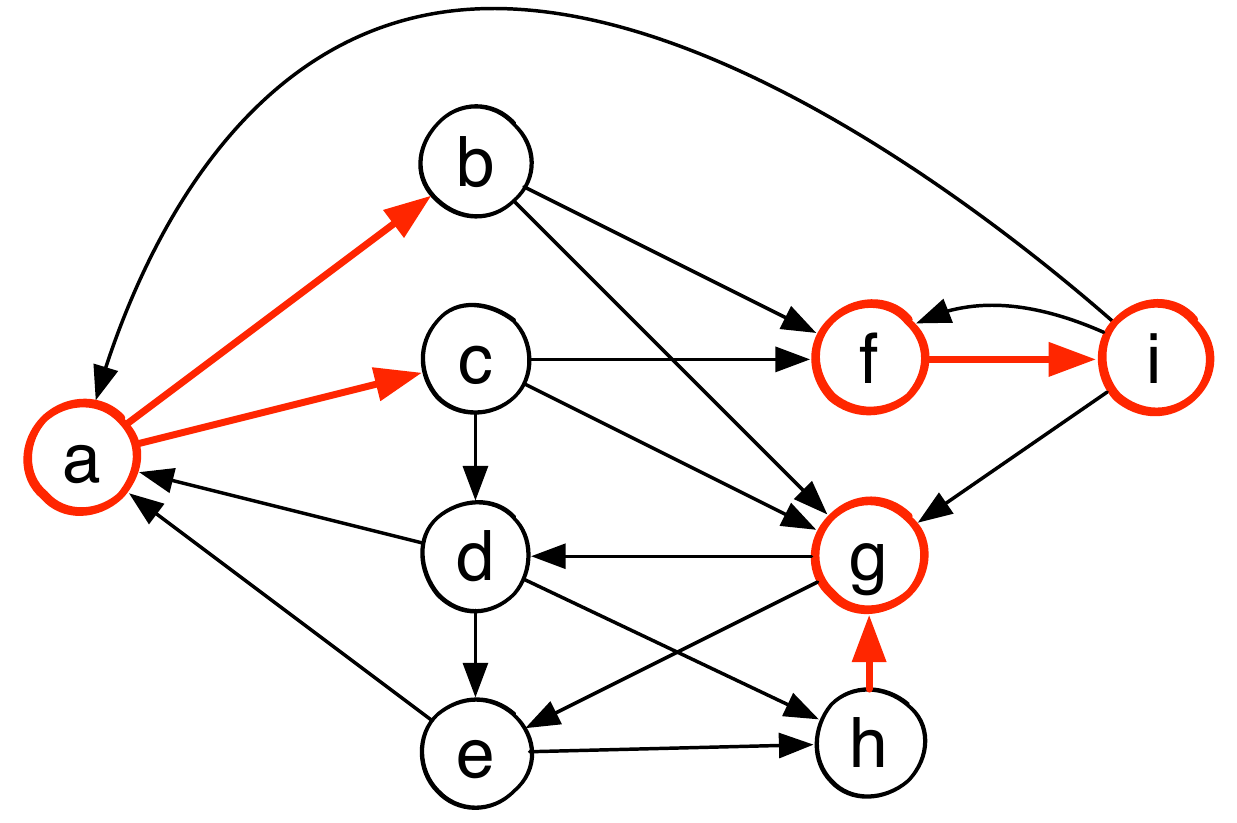}
\includegraphics[width=0.55\textwidth]{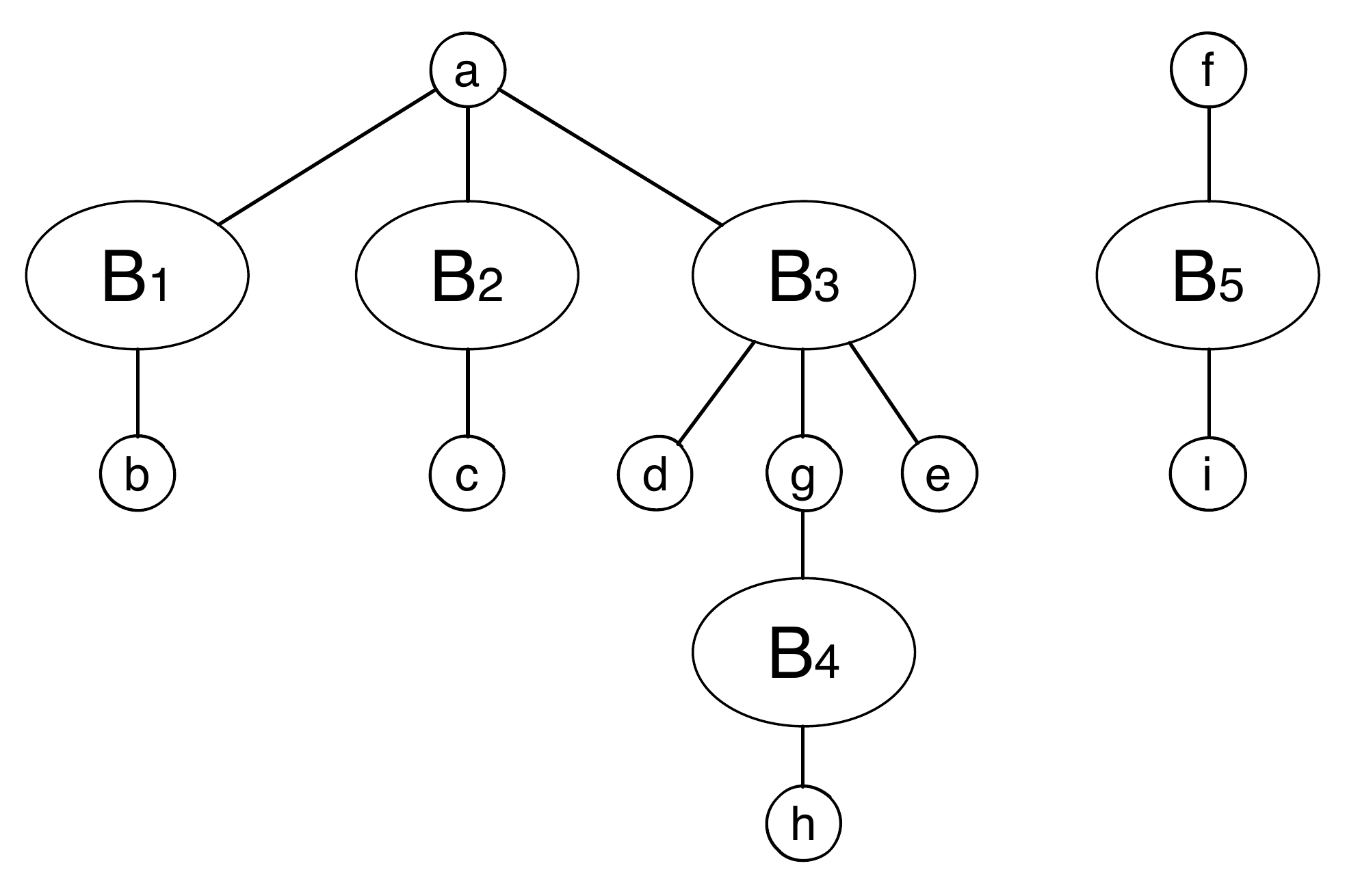}
\caption{A digraph $G$ and its vertex-resilient block forest $F$. The strong articulation points and the strong bridges of $G$ are shown in red. (Better viewed in color.)\label{fig:vrb-forest}}
\end{center}
\end{figure}

\begin{lemma}
\label{lemma:VB-graph-path-2}
Let $u$ and $v$ be any vertices that are not vertex-resilient but are connected by a path $P$ in $F$. Then, for any vertex $w \in V \setminus \{ u, v \}$ on $P$, $u$ and $v$ are not strongly connected in digraph $G \setminus w$.
\end{lemma}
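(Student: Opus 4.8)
The plan is to fix an internal vertex $w = u_j$ of the path $P=(u_1=u, B_1, u_2, \ldots, B_k, u_{k+1}=v)$ (so that $2 \le j \le k$) and to split $P$ at $w$ into a \emph{left} sub-path $P_L$ joining $u$ to $u_{j-1}$ and a \emph{right} sub-path $P_R$ joining $u_{j+1}$ to $v$. The key observation is that $w=u_j$ lies on neither $P_L$ nor $P_R$: since $F$ is a forest and $P$ is a simple path, each vertex of $P$ occurs exactly once. Hence Lemma~\ref{lemma:VB-graph-path} applies to each half and yields that $u$ and $u_{j-1}$ are strongly connected in $G \setminus w$, and likewise $u_{j+1}$ and $v$ are strongly connected in $G \setminus w$. (When $j=2$ the left half degenerates to the single vertex $u=u_{j-1}$, and when $j=k$ the right half degenerates to $v=u_{j+1}$; both degenerate cases are trivially fine.)

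The heart of the argument is a \emph{local} separation fact: $u_{j-1}$ and $u_{j+1}$ are \emph{not} strongly connected in $G \setminus u_j$. I would establish this in two steps. First, $u_{j-1} \not\leftrightarrow_{\mathrm{vr}} u_{j+1}$: the unique path joining them in the forest $F$ is the length-$4$ sub-path $(u_{j-1}, B_{j-1}, u_j, B_j, u_{j+1})$, so if they shared a block they would also be joined by a path of length $2$, contradicting the acyclicity of $F$ (Lemma~\ref{lemma:VB-graph}). Second, since $u_{j-1}$ and $u_{j+1}$ are not vertex-resilient, some vertex $a \neq u_{j-1}, u_{j+1}$ leaves them in different strongly connected components of $G \setminus a$; and this $a$ must in fact be $u_j$. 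Indeed, if $a \neq u_j$, then $a \notin \{u_{j-1}, u_j, u_{j+1}\}$, and because $u_{j-1} \leftrightarrow_{\mathrm{vr}} u_j$ and $u_j \leftrightarrow_{\mathrm{vr}} u_{j+1}$, deleting $a$ would leave $u_{j-1}$ strongly connected to $u_j$ and $u_j$ strongly connected to $u_{j+1}$, hence $u_{j-1}$ strongly connected to $u_{j+1}$ --- a contradiction. This pinning down of the ``hinge'' vertex $u_j$ is the step I expect to be the main obstacle.

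Finally, I would combine the three facts using that ``lying in the same strongly connected component of $G \setminus w$'' is an equivalence relation; write $x \equiv y$ when this holds. By the two halves we have $u \equiv u_{j-1}$ and $u_{j+1} \equiv v$ in $G \setminus w$, while the local fact gives $u_{j-1} \not\equiv u_{j+1}$. If $u$ and $v$ were strongly connected in $G \setminus w$, transitivity would yield $u_{j-1} \equiv u \equiv v \equiv u_{j+1}$, contradicting $u_{j-1} \not\equiv u_{j+1}$. Therefore $u$ and $v$ are not strongly connected in $G \setminus w$, which is exactly the claim.
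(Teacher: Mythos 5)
Your proof is correct and follows essentially the same route as the paper's: isolate the triple $(u_{j-1}, u_j, u_{j+1})$, show that the middle vertex $u_j$ separates its two neighbors, and then combine this with Lemma~\ref{lemma:VB-graph-path} applied to the two halves of $P$ and transitivity of strong connectivity in $G \setminus w$. The only difference is one of presentation: where you explicitly derive $u_{j-1} \not\leftrightarrow_{\mathrm{vr}} u_{j+1}$ from the acyclicity of $F$ (Lemma~\ref{lemma:VB-graph}) and then pin down $u_j$ as the only possible separator, the paper runs the same argument as a direct contradiction and leaves the non-vertex-resilience of $u_{j-1}$ and $u_{j+1}$ implicit, so your write-up is, if anything, slightly more complete.
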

\begin{proof}
We prove the lemma by contradiction. Let $P$ be a path that connects $u$ and $v$ in $F$. By Lemma \ref{lemma:VB-graph}, this path is unique for $u$ and $v$. First suppose that $P$ contains only one other vertex $w \in V \setminus \{u,v\}$, so $P=(u,B,w,B',v)$. Then $u \leftrightarrow_{\mathrm{vr}} w$ and $w \leftrightarrow_{\mathrm{vr}} v$. Now suppose that $u$ and $v$ are strongly connected in $G \setminus w$. This fact, together with Lemma \ref{lemma:VB-graph-path}, imply that $u$ and $v$ are strongly connected in $G \setminus x$ for all $x \in V \setminus \{u,v\}$. But this contradicts the assumption that $u$ and $v$ are not vertex-resilient.

Now suppose that path $P$ contains more than one vertex in $V \setminus \{u,v\}$. Let $P=(u=w_0, B_1, w_1, \ldots, B_{k}, w_{k}, B_{k+1}, v=w_{k+1})$, where $k>1$. By the argument above, $w_{i-1}$ and $w_{i+1}$ are not strongly connected in $G \setminus w_{i}$ for all $i \in \{1,\ldots,k\}$. Suppose that $u$ and $v$ are strongly connected in $G \setminus w_i$ for a fixed $i \in \{1,\ldots,k\}$. By Lemma \ref{lemma:VB-graph-path}, $u$ and $w_{i-1}$, and $w_{i+1}$ and $v$, are strongly connected in $G \setminus w_i$. But then, $w_{i-1}$ and $w_{i+1}$ are also strongly connected in $G \setminus w_i$, a contradiction.
\ignore{
Let $x$ be any vertex on $Q$ other than $u$ and $v$. Then $u \leftrightarrow_{\mathrm{vr}} w$ implies that $G$ contains a path $Q_1$ from $u$ to $w$ that avoids $x$, and $w \leftrightarrow_{\mathrm{vr}} v$ implies that $G$ contains a path $Q_2$ from $w$ to $v$ that avoids $x$. Hence $Q_1 \cdot Q_2$ is a path from $u$ to $v$ that avoids $x$. The same argument shows that for any vertex $x \in Q' \setminus \{u,v\}$, there is a path from $v$ to $u$ that avoids $x$.
We argue that if $G \setminus w$ contains a path from $u$ to $v$ and a path from
It suffices to show that $G$ contains a path $Q$ from $u$ to $v$ that avoids $w$. The same argument shows that $G$ contains a path from $v$ to $u$ that avoids $w$.
Let $P=(u_1=u, B_1, u_2, B_2, \ldots, u_{k+1}=v)$. Then $u_i \leftrightarrow_{\mathrm{vr}} u_{i+1}$, for $1 \le i \le k$, so there is a path $P_i$ in $G$ from $v_i$ to $v_{i+1}$ that avoids $w$.
Then the catenation of paths $P_1, \ldots, P_{k}$ gives a path in $G$ from $u$ to $v$ that avoids $w$.}
\end{proof}

We consider $F$ as a forest of rooted trees by choosing an arbitrary vertex as the root of each tree.
Then $u \leftrightarrow_{\mathrm{vr}} w$ if and only if $u$ and $w$ are siblings or one the grandparent of the other. See Figure \ref{fig:vrb-forest}.
We can perform both tests in constant time simply by storing the parent of each vertex in $F$.
Thus, we can test in constant time if two vertices are vertex-resilient. Note that we cannot always apply Lemma \ref{lemma:VB-graph-path-2} to find a strong articulation point that separates two vertices $u$ and $w$ that are not vertex-resilient. Indeed, two vertices that are strongly connected but not vertex-resilient may not even be connected by a path in the forest $F$ (see, e.g., vertices $f$ and $h$ in Figure \ref{fig:vrb-forest}). So if we wish to return a witness that $u$ and $w$ are not vertex-resilient, we cannot rely on $F$. We deal with this problem in Section \ref{sec:queries}.

Now we turn to $2$-vertex-connected blocks and provide some properties that enable us to compute them via the vertex-resilient blocks.

\begin{lemma}
\label{lemma:2vc-resilient-sb}
Let $v$ and $w$ be two distinct vertices of $G$ such that $v \leftrightarrow_{\mathrm{vr}} w$. Then, $v$ and $w$ are not $2$-vertex connected if and only if at least one of the edges $(v,w)$ and $(w,v)$ is a strong bridge in $G$.
\end{lemma}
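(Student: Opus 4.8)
The plan is to split $2$-vertex-connectivity into its two directions of reachability and handle each against one of the two candidate edges. Define condition (A): there do \emph{not} exist two internally vertex-disjoint paths from $v$ to $w$; and condition (B): the same for paths from $w$ to $v$. Since $G$ is strongly connected there is at least one path in each direction, so $v \leftrightarrow_{\mathrm{2v}} w$ fails precisely when (A) holds or (B) holds. I would then prove the single equivalence ``(A) holds $\iff$ $(v,w)$ is a strong bridge'' and, by the symmetric argument with reversed roles, ``(B) holds $\iff$ $(w,v)$ is a strong bridge''. Taking the disjunction of these two equivalences is exactly the claimed statement, so the whole proof reduces to establishing one of them.

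The first and main step is the direction ``$(v,w)$ strong bridge $\Rightarrow$ (A)'', which is where the structural content lies. Suppose $(v,w)$ is a strong bridge; then $(v,w)\in E$ and $G \setminus (v,w)$ is not strongly connected. I would examine the condensation (the DAG of strongly connected components) of $G \setminus (v,w)$: because reinserting the single arc $(v,w)$ restores strong connectivity, this DAG must collapse to one component once we add an arc from the component of $v$ to the component of $w$. A short source/sink argument then forces the component of $w$ to be the unique source and the component of $v$ to be the unique sink, and shows these two components are distinct (otherwise deleting an intra-component edge could not change the component structure of a strongly connected graph). Since the sink component cannot reach the source component inside the DAG, $w$ is unreachable from $v$ in $G \setminus (v,w)$. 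Hence the edge $(v,w)$ is the only $v$-to-$w$ path, so there are not two internally vertex-disjoint $v$-to-$w$ paths and (A) holds.

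For the converse, suppose (A) holds. Since $G$ is strongly connected there is at least one $v$-to-$w$ path, so the maximum number of internally vertex-disjoint $v$-to-$w$ paths is exactly one. Here I would split on adjacency. If $(v,w)\notin E$, then Menger's Theorem yields a single vertex $x \neq v,w$ lying on every $v$-to-$w$ path; but then $v$ cannot reach $w$ in $G \setminus x$, contradicting $v \leftrightarrow_{\mathrm{vr}} w$. This is exactly where the vertex-resilience hypothesis is indispensable, since without it two non-adjacent vertices could satisfy (A) with no edge being a strong bridge. Thus $(v,w)\in E$. If $w$ were reachable from $v$ in $G \setminus (v,w)$, that path together with the edge $(v,w)$ (which has no internal vertex) would furnish two internally vertex-disjoint $v$-to-$w$ paths, contradicting (A). So $w$ is unreachable from $v$ in $G \setminus (v,w)$, whence $G \setminus (v,w)$ is not strongly connected and $(v,w)$ is a strong bridge.

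Repeating these two arguments for $w$-to-$v$ paths and the edge $(w,v)$ gives the symmetric equivalence, and combining the two completes the proof. I expect the main obstacle to be the step ``$(v,w)$ strong bridge $\Rightarrow$ (A)'': the subtle point is that a strong bridge only guarantees that \emph{some} pair becomes disconnected, and one must argue cleanly that it is precisely $w$ that loses reachability \emph{from} $v$; the unique-source/unique-sink analysis of the condensation is the crux. A secondary subtlety worth stating explicitly is that the internally-vertex-disjoint form of Menger's Theorem is valid only for non-adjacent endpoints, which is why the adjacency split together with the direct reachability characterization in $G \setminus (v,w)$ replaces a blanket appeal to Menger in the adjacent case.
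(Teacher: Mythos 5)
Your proposal is correct. Where it overlaps with the paper's proof it takes the same route: vertex-resilience is used only to invoke Menger's Theorem in the non-adjacent case, forcing $v$ and $w$ to be adjacent whenever $2$-vertex-connectivity fails, after which the offending edge must be a strong bridge. The difference is one of completeness rather than of strategy. The paper's proof is two sentences long: it records the Menger equivalence for non-adjacent pairs and then simply asserts the conclusion in the adjacent case; the converse implication (that a strong bridge $(v,w)$ or $(w,v)$ destroys $2$-vertex-connectivity) is not argued at all, being treated as immediate. You prove both implications explicitly: the edge-plus-path observation (a $v$-to-$w$ path in $G \setminus (v,w)$ together with the edge $(v,w)$, which has no internal vertices, forms two internally vertex-disjoint paths) yields the forward half, and your condensation argument --- the component of $w$ is the unique source and the component of $v$ the unique sink of the condensation of $G \setminus (v,w)$, hence $w$ is unreachable from $v$ there --- supplies the converse that the paper leaves implicit. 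Your per-direction formulation is in fact slightly sharper than the lemma as stated, since it identifies which of the two edges witnesses which direction of failure; this is exactly the information exploited in Section~\ref{section:2vc-blocks} for reporting witnesses, so nothing in your extra work is wasted. One definitional point that both you and the paper rely on silently: ``two internally vertex-disjoint paths'' must mean two \emph{distinct} paths, since otherwise the single-edge path counted twice would make every mutually adjacent pair $2$-vertex-connected, contradicting the paper's own remarks.
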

\begin{proof}
Menger's Theorem~\cite{menger} implies that if $v$ and $w$ are not adjacent then $v \leftrightarrow_{\mathrm{2v}} w$ if and only if  $v \leftrightarrow_{\mathrm{vr}} w$. If, on the other hand, $v \leftrightarrow_{\mathrm{vr}} w$ but $v$ and $w$ are not $2$-vertex-connected, then at least one of the edges $(v,w)$ and $(w,v)$ exists in $G$ and is a strong bridge.
\end{proof}

The following corollary, which relates $2$-vertex-connected, $2$-edge-connected and vertex-resilient blocks, is an immediate consequence of Lemma \ref{lemma:2vc-resilient-sb}.

\begin{corollary}
\label{cor:2vc-resilient}
For any two distinct vertices $v$ and $w$, $v \leftrightarrow_{\mathrm{2v}} w$ if and only if $v \leftrightarrow_{\mathrm{vr}} w$ and $v \leftrightarrow_{\mathrm{2e}} w$.
\end{corollary}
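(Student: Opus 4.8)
The plan is to establish the two implications separately, handling the forward direction by unwinding the definitions and the backward direction as the promised immediate application of Lemma~\ref{lemma:2vc-resilient-sb}.

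For the forward direction, I would assume $v \leftrightarrow_{\mathrm{2v}} w$ and fix two internally vertex-disjoint paths $P_1, P_2$ from $v$ to $w$, together with two internally vertex-disjoint paths from $w$ to $v$. To obtain $v \leftrightarrow_{\mathrm{vr}} w$, I note that any vertex $x \notin \{v,w\}$ can lie on at most one of $P_1, P_2$, since these paths share no internal vertex; hence a path from $v$ to $w$ survives in $G \setminus x$, and symmetrically a path from $w$ to $v$ survives, so $v$ and $w$ remain in the same strongly connected component. To obtain $v \leftrightarrow_{\mathrm{2e}} w$, I would observe that two distinct internally vertex-disjoint paths are automatically edge-disjoint: a shared edge $(a,b)$ would force $a=v$ and $b=w$ (otherwise $a$ or $b$ would be a common internal vertex), so both paths would reduce to the single edge $(v,w)$ and would coincide. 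Thus the two vertex-disjoint paths in each direction are also edge-disjoint, yielding $v \leftrightarrow_{\mathrm{2e}} w$.

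For the backward direction, I would assume $v \leftrightarrow_{\mathrm{vr}} w$ and $v \leftrightarrow_{\mathrm{2e}} w$ and argue by contradiction that $v \leftrightarrow_{\mathrm{2v}} w$. If $v$ and $w$ were not $2$-vertex-connected, then since $v \leftrightarrow_{\mathrm{vr}} w$, Lemma~\ref{lemma:2vc-resilient-sb} would supply a strong bridge among $(v,w)$ and $(w,v)$; say $(v,w)$ is a strong bridge. The crux is to show that the two endpoints of a strong bridge can never be $2$-edge-connected. Since $G$ is strongly connected, if deleting $(v,w)$ left $v$ and $w$ in the same strongly connected component, then every path traversing $(v,w)$ could be rerouted along the surviving path from $v$ to $w$, so $G \setminus (v,w)$ would remain strongly connected, contradicting that $(v,w)$ is a strong bridge. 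Hence deleting $(v,w)$ separates $v$ and $w$, which by the edge form of Menger's Theorem contradicts $v \leftrightarrow_{\mathrm{2e}} w$.

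The only step requiring any thought is this last observation, that removing a strong bridge leaves its two endpoints in distinct strongly connected components; everything else follows directly from the definitions and from Lemma~\ref{lemma:2vc-resilient-sb}, which is precisely why the statement is phrased as a corollary.
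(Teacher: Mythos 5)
Your proposal is correct and follows essentially the same route as the paper: the paper proves this corollary in one line as an ``immediate consequence'' of Lemma~\ref{lemma:2vc-resilient-sb}, and your argument is exactly the honest unwinding of that claim, invoking the lemma for the backward direction and supplying the routine definitional facts (two internally vertex-disjoint paths give vertex-resilience, and two distinct internally vertex-disjoint paths are automatically edge-disjoint) for the forward one. Your closing observation that removing a strong bridge must separate its two endpoints, via the rerouting argument, is precisely the right way to make the contradiction with $v \leftrightarrow_{\mathrm{2e}} w$ explicit.
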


\ignore{
\begin{lemma}
\label{lemma:2vc-resilient}
For any two distinct vertices $v$ and $w$, $v \leftrightarrow_{\mathrm{2v}} w$ if and only if $v \leftrightarrow_{\mathrm{vr}} w$ and $v \leftrightarrow_{\mathrm{2e}} w$.
\end{lemma}
}

By Corollary \ref{cor:2vc-resilient} we have that the $2$-vertex-connected blocks are refinements of the vertex-resilient blocks, formed by the intersections of the vertex-resilient blocks and the $2$-edge-connected blocks of the digraph $G$. Since the $2$-edge-connected blocks are a partition of the vertices of $G$, these intersections partition each vertex-resilient block.
From this property we conclude that Lemmas \ref{lemma:VR},  \ref{lemma:VB-graph}, and \ref{lemma:blocks-number} and Corollary \ref{corollary:VRB} also hold for the $2$-vertex-connected blocks.

\section{Computing the vertex-resilient blocks}
\label{section:vertex-resilient-blocks}

In this section we present new algorithms for computing the vertex-resilient blocks of a digraph $G$. We can assume that $G$ is strongly connected, so $m \ge n$. If not, then we process each strongly connected component separately; if $u \leftrightarrow_{\mathrm{vr}} v$ then $u$ and $v$ are in the same strongly connected component $S$ of $G$, and moreover, any vertex on a path from $u$ to $v$ or from $v$ to $u$ also belongs in $S$.
We begin with a simple algorithm that removes a single strong articulation point at a time. In order to get a more efficient solution, we need to consider simultaneously how different strong articulation points divide the vertices into blocks, which we do with the help of dominator trees. We achieve linear running time by combining the simple algorithm with the dominator-tree-based division, and by applying suitable operations on the block forest structure.

\subsection{A simple algorithm}
\label{section:vr-simple}

Algorithm \textsf{SimpleVRB}, illustrated in Figure \ref{fig:SimpleVRB}, is an immediate application of the characterization of the vertex-resilient blocks in terms of strong articulation points.
Let $u$ and $v$ be two distinct vertices. We say that a strong articulation point $x$ \emph{separates $u$ from $v$} if all paths from $u$ to $v$ contain $x$. In this case $u$ and $v$ belong to different strongly connected components of $G\setminus x$.
This observation implies that we can compute the vertex-resilient blocks by computing the strongly connected components of $G\setminus x$ for every strong articulation point $x$.
To do this efficiently we define an operation that refines the currently computed blocks.
Let $\mathcal{B}$ be a set of blocks, let $\mathcal{S}$ be a partition of a set $U \subseteq V$, and let $x$ be a vertex not in $U$.

\begin{list}{}{
\setlength{\leftmargin}{1.7cm} \setlength{\labelsep}{.2cm} \setlength{\itemsep}{0cm}\setlength{\labelwidth}{1.5cm}
}
\item[$\mathit{refine}(\mathcal{B}, \mathcal{S}, x)$:] For each block $B \in \mathcal{B}$, substitute $B$ by the sets  $B \cap (S \cup \{ x \} )$ of size at least two, for all  $S \in \mathcal{S}$.
\end{list}

\noindent
In Section \ref{section:2vc-blocks}, where we will  compute the $2$-vertex-connected blocks from the vertex-resilient blocks and the $2$-edge-connected blocks,
we will use the notation $\mathit{refine}(\mathcal{B}, \mathcal{S})$ as a shorthand for $\mathit{refine}(\mathcal{B}, \mathcal{S}, x)$ with $x = \mathit{null}$.

\begin{lemma}
\label{lemma:refine}
Let $N$ be the total number of elements in all sets of $\mathcal{B}$ ($N=\sum_{B \in \mathcal{B}}|B|$), and let $K$ be the number of elements in $U$. Then, the operation $\mathit{refine}(\mathcal{B}, \mathcal{S}, x)$ can be executed in $O(N+K)$ time.
\end{lemma}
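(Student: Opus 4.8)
The plan is to implement $\mathit{refine}$ by a bucketing scheme that processes each block with a single scan of its elements, charging all work either to the elements of the blocks (total $N$) or to the elements of $U$ (total $K$). First I would set up, in a global array $\sigma[\cdot]$ indexed by vertices, a label $\sigma(u)$ identifying the unique part $S \in \mathcal{S}$ that contains $u$, with a sentinel value $\bot$ stored for every vertex outside $U$. Since $\mathcal{S}$ is a partition of $U$, this labeling is produced by one pass over the parts of $\mathcal{S}$, assigning each part a distinct identifier and tagging its members, at cost $O(K)$. Note that $\sigma(x)=\bot$ because $x \notin U$.

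Next, for each block $B \in \mathcal{B}$ I would compute the family $\{\,B \cap (S \cup \{x\}) : S \in \mathcal{S}\,\}$ by one scan of $B$. Since $x \notin U$, we have $B \cap (S \cup \{x\}) = (B \cap S) \cup (B \cap \{x\})$, so a vertex of $B$ contributes to an output set only if it lies in $U$ (in which case it joins the bucket of its part $\sigma(u)$) or equals $x$. I would therefore scan $B$ once: elements $u$ with $\sigma(u)=\bot$ and $u \neq x$ are discarded; the occurrence of $x$ (if any) is recorded by a Boolean flag; and each remaining element $u$ is appended to a bucket indexed by $\sigma(u)$. To avoid paying for parts of $\mathcal{S}$ that $B$ does not meet, I would maintain, during the scan of $B$, a list of the part-identifiers that have become \emph{active} (i.e., received at least one element), creating a bucket the first time its part is encountered. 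After the scan, I would traverse this active list, emit each bucket $B \cap S$ — augmented with $x$ when the flag is set — as a new block provided its size is at least two, and then empty the bucket so the structures are clean for the next block.

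The size test is handled directly: when $x \in B$ every nonempty bucket yields a set of size at least two and is kept; when $x \notin B$ a bucket is kept only if it already holds two or more elements; buckets that would give a singleton (in particular one consisting of $x$ alone) are simply not emitted. Correctness is immediate from the identity for $B \cap (S \cup \{x\})$ above together with the fact that the active list enumerates exactly the parts $S$ with $B \cap S \neq \emptyset$.

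For the time bound, the labeling together with its final reset (touching only the $K$ entries of $\sigma$ corresponding to $U$) costs $O(K)$. For each block $B$, the scan and the subsequent traversal of its active list both take $O(|B|)$ time, since the active list has length at most $|B \cap U| \le |B|$ and each bucket is emptied by removing exactly the elements inserted into it. Summing over all blocks gives $O\!\big(\sum_{B \in \mathcal{B}} |B|\big) = O(N)$, hence $O(N+K)$ overall. The one delicate point — and the only place where a naive implementation would fail — is precisely the avoidance of an $O(|\mathcal{B}|\cdot|\mathcal{S}|)$ blow-up from iterating over all parts for every block; the active-list device confines the per-block cost to $O(|B|)$ and yields the claimed bound.
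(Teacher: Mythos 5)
Your proof is correct and takes essentially the same approach as the paper's: give each part of $\mathcal{S}$ a distinct integer id, label every element of a block with the id of its part (a sentinel for elements outside $U$), and then form the intersections $B \cap (S \cup \{x\})$ by bucketing each block's elements by label in $O(|B|)$ time. Your active-list device merely spells out the implementation detail that the paper's terse appeal to ``bucket sorting'' leaves implicit, namely how to avoid paying $O(K)$ per block for untouched buckets.
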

\begin{proof}
A simple way to achieve the claimed bound is to number the sets of the partition $\mathcal{S}$, each with a distinct integer id in the interval $[1,K]$.
Consider a block $B$. Each element $v \in B$ is assigned a label that is equal to the id of the set $S \in \mathcal{S}$ that contains  $v$ if $v \in U$, and zero otherwise.
Then, the computation of the sets $B \cap (S \cup \{ x \} )$ for all $S \in \mathcal{S}$ can be done in $O(|B|)$ time with bucket sorting.
\end{proof}

\begin{figure}[t!]
\begin{center}
\fbox{
\begin{minipage}[h]{\textwidth}
\begin{center}
\textbf{Algorithm \textsf{SimpleVRB}: Computation of the vertex-resilient blocks of a strongly connected digraph $G=(V,E)$}
\end{center}
\begin{description}\setlength{\leftmargin}{10pt}
\item[Step 1:] Compute the strong articulation points of $G$.
\item[Step 2:] Initialize the current set of blocks as $\mathcal{B} = \{ V \}$. (Start from the trivial set containing only one block.)
\item[Step 3:] For each strong articulation point $x$ do:
  \begin{description}\setlength{\leftmargin}{10pt}
    \item[Step 3.1:] Compute the strongly connected components $S_1,\ldots,S_k$ of $G\setminus x$. Let $\mathcal{S}$ be the partition of $V\setminus x$ defined by the strongly connected components $S_i$.
    \item[Step 3.2:] Execute $\mathit{refine}(\mathcal{B}, \mathcal{S}, x)$.
  \end{description}
\end{description}
\end{minipage}
}
\caption{Algorithm \textsf{SimpleVRB}\label{fig:SimpleVRB}}
\end{center}
\end{figure}

\begin{lemma}
\label{lemma:SimpleVRB}
Algorithm \textsf{SimpleVRB} runs in $O(m p^{\ast})$ time, where $p^{\ast}$ is the number of strong articulation points of $G$. This is $O(m n)$ in the worst case.
\end{lemma}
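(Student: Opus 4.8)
The plan is to bound the running time by analyzing the cost of each of the three steps and summing over all strong articulation points.

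First I would account for Step 1. By the discussion in Section~\ref{sec:dominators}, the strong articulation points of $G$ can be computed from the dominator trees of $G(s)$ and $G^R(s)$, and since dominator trees are computable in linear time, Step~1 costs $O(m+n) = O(m)$ time (recall $m \ge n$ because $G$ is strongly connected). Step~2 is a trivial initialization costing $O(n)$ time.

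The bulk of the analysis is Step~3, which iterates once per strong articulation point $x$. For a fixed $x$, Step~3.1 computes the strongly connected components of $G \setminus x$; this is a single depth-first-search computation (e.g.\ Tarjan's algorithm) on a graph with at most $n$ vertices and $m$ edges, hence $O(m+n) = O(m)$ time, and producing the partition $\mathcal{S}$ of $V \setminus x$ is within the same bound. For Step~3.2, I would invoke Lemma~\ref{lemma:refine}: the operation $\mathit{refine}(\mathcal{B}, \mathcal{S}, x)$ runs in $O(N + K)$ time, where $N = \sum_{B \in \mathcal{B}} |B|$ is the total size of the current blocks and $K \le n$ is the number of elements partitioned by $\mathcal{S}$. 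The one point that needs care is bounding $N$: by Corollary~\ref{corollary:VRB} and Lemmas~\ref{lemma:VB-graph} and~\ref{lemma:blocks-number}, the final vertex-resilient blocks have total size at most $2n-2$ (Lemma~\ref{lemma:blocks-size}), and since the sets produced by $\mathit{refine}$ only shrink and split as the algorithm progresses, the intermediate collection $\mathcal{B}$ never has total size exceeding $O(n)$; thus $N = O(n)$ at every iteration, so each $\mathit{refine}$ call costs $O(n) = O(m)$.

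Summing over all $p^{\ast}$ strong articulation points, Step~3 costs $O(m \, p^{\ast})$, which dominates the $O(m)$ cost of Steps~1 and~2, giving the total bound $O(m \, p^{\ast})$. Since $p^{\ast} \le n$ trivially (there are at most $n$ vertices), this is $O(mn)$ in the worst case. I expect the main obstacle to be the argument that the intermediate total block size $N$ stays $O(n)$ throughout the execution; the cleanest justification is to observe that $\mathit{refine}$ keeps only the parts of size at least two, and that the collection $\mathcal{B}$ after processing any prefix of the strong articulation points is itself a valid collection of ``approximate blocks'' whose pairwise-intersection structure is governed by Lemma~\ref{lemma:VR}, so the forest-counting bound of Lemma~\ref{lemma:blocks-size} applies uniformly.
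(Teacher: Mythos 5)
Your proposal takes essentially the same route as the paper's proof: strong articulation points in linear time, one $O(m)$ strongly-connected-components computation per articulation point, and an $O(n)$ bound per $\mathit{refine}$ call coming from the forest structure of the intermediate block collection, for $O(mp^{\ast})$ total. One correction is needed, though: your first justification for $N=O(n)$ --- that the blocks ``only shrink and split'' so the total size never grows --- is false as stated. The operation $\mathit{refine}(\mathcal{B},\mathcal{S},x)$ duplicates the vertex $x$ across all surviving pieces $B\cap(S\cup\{x\})$, so $N$ can increase; already after the first iteration the total size can be $n-1+k$, where $k$ is the number of strongly connected components of $G\setminus x$. The justification you label ``cleanest'' is in fact the necessary one, and it is exactly what the paper does: by induction on the iterations, any two distinct blocks in $\mathcal{B}$ share at most one vertex and the associated bipartite block graph remains a forest, so the counting arguments behind Lemmas \ref{lemma:blocks-number} and \ref{lemma:blocks-size} bound the number of blocks by $n-1$ and the total size by $2n-2$ at every stage (the paper also notes that the number of blocks never decreases, so it is at most its final value $n-1$). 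One further nuance: Lemma \ref{lemma:VR} as stated concerns the true vertex-resilient relation, whereas the intermediate blocks are blocks of a coarser relation (resilience with respect to only the articulation points processed so far); to make your appeal to it rigorous you should either observe that the proof of Lemma \ref{lemma:VR} goes through verbatim for this restricted relation, or argue the invariant directly from the structure of $\mathit{refine}$ (two pieces of the same block meet only in $x$, and pieces of different blocks inherit $|B\cap B'|\le 1$).
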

\begin{proof}
The strong articulation points of $G$ can be computed in linear time by \cite{Italiano2012}. In each iteration of Step 3, we can compute the strongly connected components of $G\setminus x$ in linear time \cite{dfs:t}.
As we discover the $i$-th strongly connected component, we assign label $i$ ($i \in \{1,\ldots,n\}$) to the vertices in $S_i$.
By Lemma \ref{lemma:blocks-number}, the number of vertex-resilient blocks of $G$ is at most $n-1$. 
Therefore, since the total number of blocks (trivial and non-trivial) cannot decrease during any iteration, $\mathcal{B}$ contains at most $n-1$ blocks in each execution of Step 3. 
By induction on the number of iterations, it follows that the algorithm maintains the invariant that any two distinct blocks in $\mathcal{B}$ have at most one element in common, and that the corresponding block graph is a forest. Therefore, by Lemma \ref{lemma:blocks-number}, the total number of elements in all blocks is at most $2n-2$. So, by Lemma \ref{lemma:refine}, each iteration of Step 3 takes $O(n)$ time. This yields the desired $O(m p^{\ast})$ running time, where $p^*$ is the number of strong articulation points of $G$.
Since a digraph may have up to $n$ strong articulation points, this is $O(m n)$ in the worst case.
\end{proof}


\subsection{Linear-time algorithm}
\label{section:VRB-linear}

We will show how to obtain a faster algorithm by applying the framework developed in \cite{2ECB} for the computation of the $2$-edge-connected blocks, namely by using dominator trees and auxiliary graphs. As already mentioned, auxiliary graphs need to be defined in a substantially different way, which complicates several technical details. 

As a warm up, first consider the computation of $\mathit{VRB}(v)$, i.e., the vertex-resilient blocks that contain a specific vertex $v$. Consider the flow graph $G(v)$ with start vertex $v$ and its reverse digraph $G^R(v)$, obtained after reversing edge directions. Let $w$ be a vertex other than $v$. Clearly, $v$ and $w$ are vertex-resilient if and only if $v$ is the only proper dominator of $w$ in both $G(v)$ and $G^R(v)$, i.e., $d(w)=v$ and $d^R(w)=v$. Now let $u$ be a sibling of $w$ in both $D(v)$ and $D^R(v)$. The fact that $d^R(w)=v$ and $d(u)=v$ implies that for any vertex $x \in V \setminus \{v,w,u\}$ there is path from $w$ to $u$ through $v$ that avoids $x$. So $w$ and $u$ are in a common vertex-resilient block that contains $v$ if and only if they lie in the same strongly connected component of $G \setminus v$. This observation implies the following linear-time algorithm to compute the vertex-resilient blocks that contain $v$. Compute the dominator trees $D(v)$ and $D^R(v)$ of $G(v)$ and $G^R(v)$ respectively. Let $C(v)$ (resp., $C^R(v)$) be the set of children of $v$ in $G(v)$ (resp., $G^R(v)$). Set $U = C(v) \cap C^R(v)$ and initialize the set of blocks $\mathcal{B} = \{ U \}$. Compute the strongly connected blocks $S_1, S_2, \ldots, S_k$ of $G \setminus v$. Let $\mathcal{S}$ be the set that contains the nonempty restrictions of the $S_i$ sets to $U$, i.e., $\mathcal{S}$ contains the nonempty sets $S_i \cap U$. Finally, execute $\mathit{refine}(\mathcal{B}, \mathcal{S}, v)$.

Note that all the vertex-resilient blocks can be computed in $O(mn)$ time by applying the above algorithm to all vertices $v$.
To avoid the repeated applications of this algorithm we develop a new concept of \emph{auxiliary graphs} for $2$-vertex connectivity. Before doing that, we state two properties regarding information that a dominator tree can provide about vertex-resilient blocks and paths.

\begin{lemma}
\label{lemma:vertex-resilient-necessary}
Let $G=(V,E)$ be a strongly connected graph, and let $s \in V$ be an arbitrary start vertex. Any two vertices $x$ and $y$ are vertex-resilient only if they are siblings in $D(s)$ or one is the immediate dominator of the other in $G(s)$.
\end{lemma}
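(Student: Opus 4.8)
The plan is to prove the contrapositive: assuming that $x$ and $y$ are neither siblings in $D(s)$ nor related as immediate dominator and child, I would exhibit a single vertex $p \notin \{x,y\}$ that lies on \emph{every} path from $x$ to $y$ in $G$. Such a $p$ immediately certifies that $x$ and $y$ are \emph{not} vertex-resilient, since $G \setminus p$ then contains no path from $x$ to $y$ and hence leaves them in distinct strongly connected components.

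First I would normalize the configuration. Because the claim is symmetric in $x$ and $y$, I can choose the names so that $y$ is \emph{not} an ancestor of $x$ in $D(s)$: if $x$ and $y$ are comparable, let $y$ be the deeper one; if they are incomparable with nearest common ancestor $z$, the hypothesis that they are not siblings means at least one of them is not a child of $z$, and I take that vertex to be $y$. In either case $y \neq s$, so its immediate dominator $p = d(y)$ is well defined, and I claim $p$ is the separating vertex. That $p \neq y$ is immediate; that $p \neq x$ follows because in the comparable case $p = d(y) \neq x$ is exactly the excluded ``immediate dominator'' relation, while in the incomparable case $p$ lies strictly below $z$ on the branch toward $y$ and is therefore incomparable to $x$.

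The heart of the argument is a reachability-versus-domination contradiction. Since $p = d(y)$ dominates $y$, every path from $s$ to $y$ contains $p$. On the other hand $p$ does \emph{not} dominate $x$: in the comparable case $x$ is a proper ancestor of $p = d(y)$, while in the incomparable case $p$ and $x$ are incomparable in $D(s)$; either way $p$ is not an ancestor of $x$, so there is a path $\rho$ from $s$ to $x$ that avoids $p$. Now if some path $\pi$ from $x$ to $y$ avoided $p$, then the walk $\rho \cdot \pi$ would reach $y$ from $s$ without passing through $p$ (and hence contain a simple $s$-to-$y$ path avoiding $p$), contradicting the fact that $p$ dominates $y$. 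Therefore every path from $x$ to $y$ contains $p$, which establishes the contrapositive.

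The main obstacle I anticipate is precisely bridging the gap between the $s$-rooted dominator relation, which constrains paths \emph{from the start vertex}, and vertex-resilience, which constrains paths \emph{between $x$ and $y$}; the concatenation trick above is what converts the statement ``$p$ dominates $y$ but not $x$'' into ``$p$ separates $x$ from $y$.'' A secondary point to handle with care is the case bookkeeping around the nearest common ancestor $z$ together with the degenerate possibility $x = s$, so as to guarantee in every subcase that the chosen vertex $p$ is genuinely distinct from both $x$ and $y$.
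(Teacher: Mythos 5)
Your proof is correct: the normalization making $y$ non-ancestral to $x$ is exhaustive, the chosen vertex $p=d(y)$ is verified to be distinct from both $x$ and $y$ in every case, and the concatenation of an $s$-to-$x$ path avoiding $p$ with a hypothetical $x$-to-$y$ path avoiding $p$ correctly contradicts $p$ dominating $y$. The paper's own proof is just ``Immediate,'' and your argument is exactly the one it implicitly relies on (the same reasoning appears, in essentially identical form, in Lemma~\ref{lemma:blocks-paths} and in Section~\ref{sec:queries}), so you have simply supplied the details the authors chose to omit.
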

\begin{proof}
Immediate.
\end{proof}

\begin{lemma}
\label{lemma:blocks-paths}
Let $r$ be a vertex, and let $v$ be any vertex that is not a descendant of $r$ in $D(s)$. Then there is a path from $v$ to $r$ that does not contain any proper descendants of $r$ in $D(s)$. Moreover, all simple paths from $v$ to any descendant of $r$ in $D(s)$ contain $r$.
\end{lemma}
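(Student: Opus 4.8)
The plan is to reduce both claims to the defining property of the dominator tree recalled in Section~\ref{sec:dominators}: a vertex $z$ is a descendant of $r$ in $D(s)$ if and only if $r$ dominates $z$, i.e. every path from the source $s$ to $z$ contains $r$. The crucial bridge between the \emph{global} notion of domination (which concerns paths emanating from $s$) and the \emph{local} statement of the lemma (which concerns paths leaving an arbitrary vertex $v$) is the hypothesis that $v$ is not a descendant of $r$: this means $r$ does not dominate $v$, so there is a path $P$ from $s$ to $v$ that avoids $r$. I will also use that $G$ is strongly connected (the standing assumption of this section), so that a path from $v$ to $r$ exists at all.

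I would prove the second (``moreover'') statement first, since the first statement then follows cleanly from it. Let $w$ be any descendant of $r$ and suppose, for contradiction, that some simple path $\pi$ from $v$ to $w$ avoids $r$. Concatenating $P$ with $\pi$ yields a walk from $s$ to $w$ that avoids $r$; extracting a simple path from this walk gives a simple $s$-to-$w$ path that still avoids $r$ (deleting repeated vertices never introduces new ones). This contradicts the fact that $r$ dominates $w$. Hence every simple path from $v$ to a descendant of $r$ must contain $r$.

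For the first statement, take any simple path $Q$ from $v$ to $r$ (one exists by strong connectivity). Suppose $Q$ contained a proper descendant $z$ of $r$. The prefix of $Q$ from $v$ to $z$ is itself a simple path ending at the descendant $z$ of $r$, so by the second statement it must contain $r$. But then $r$ would appear strictly before $z$ on $Q$ (since $z \ne r$), contradicting that $Q$ is simple and has $r$ as its last vertex. Therefore $Q$ avoids all proper descendants of $r$, which proves the existence of the desired path.

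I expect the only subtlety — and the main thing to get right — to be this direction-reversal between domination and the lemma's paths: the argument hinges on turning ``$r$ does not dominate $v$'' into the concrete $r$-avoiding path $P$ from $s$ to $v$, together with the elementary but essential observation that a walk avoiding a vertex contains a simple path avoiding that same vertex. An alternative, equally short route to the first statement bypasses the second and instead invokes the parent property of $D(s)$ directly: if $z$ is the first proper descendant of $r$ on $Q$ and $y$ is its predecessor, then the edge $(y,z)$ forces $y$ to be a descendant of $d(z)$, hence of $r$, hence (being distinct from the endpoint $r$) a proper descendant of $r$ preceding $z$ — a contradiction. I would keep the derivation-from-the-second-statement version as the primary argument for brevity.
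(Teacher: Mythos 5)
Your proof is correct and rests on exactly the same key idea as the paper's: since $v$ is not a descendant of $r$, there is an $s$-to-$v$ path avoiding $r$, and concatenating it with a $v$-to-descendant path that avoids $r$ would produce an $s$-to-descendant path avoiding $r$, contradicting the dominator relation. The only difference is organizational (you prove the ``moreover'' claim first and derive the first claim from it via simplicity of the path, whereas the paper argues directly about the first vertex on a $v$-to-$r$ path that is a descendant of $r$), which is immaterial.
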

\begin{proof}
Let $P$ be any path from $v$ to $r$. (Such a path exists since graph $G$ is strongly connected.) Let $u$ be the first vertex on $P$ such that $u$ is a descendant of $r$. Then either $u=r$ or $u$ is a proper descendant of $r$. In the first case the lemma holds. Suppose $u$ is a proper descendant of $r$. Since $v$ is not a descendant of $r$ in $D(s)$, there is a path $Q$ from $s$ to $v$ in $G$ that does not contain $r$. Then $Q$ followed by the part of $P$ from $v$ to $u$ is a path from $s$ to $u$ that avoids $r$, a contradiction.
\end{proof}

\subsubsection{Auxiliary graphs}
\label{section:auxiliary}

As in \cite{2ECB}, \emph{auxiliary graphs} are a key concept in our algorithm that provides a decomposition of the input digraph $G$ into smaller digraphs (not necessarily subgraphs of $G$) that maintain the original vertex-resilient blocks.
In \cite{2ECB} we used a \emph{canonical decomposition} of the input digraph, in order to obtain auxiliary graphs that maintain the $2$-edge-connected blocks. A key property of this decomposition was the fact that any vertex in an auxiliary graph $G_r$ is reachable from a vertex outside $G_r$ only though a single strong bridge. In the computation of the vertex-resilient blocks, however, we have to decompose the input digraph according to strong articulation points, and thus the above property is completely lost. To overcome this critical issue, we apply a different and more involved decomposition.

Let $s$ be an arbitrarily chosen start vertex in $G$. Recall that we denote by $G(s)$ the flow graph with start vertex $s$, by $G^R(s)$ the flow graph obtained from $G(s)$ after reversing edge directions, by $D(s)$ and $D^R(s)$ the
dominator trees of $G(s)$ and $G^R(s)$ respectively, and by $C(v)$ and $C^R(v)$ the set of children of $v$ in $D(s)$ and $D^R(s)$ respectively.

For each vertex $r$, let $C^k(r)$ denote the level $k$ descendants of $r$, i.e., $C^0(r)=\{r\}$, $C^1(r)=C(r)$, etc. For each vertex $r \not=s$ that is not a leaf in $D(s)$ we build the \emph{auxiliary graph $G_r = (V_r, E_r)$ of $r$} as follows. The vertex set of $G_r$ is $V_r =\cup_{k=0}^{3} C^k(r)$ and it is partitioned into a set of \emph{ordinary} vertices  $V_r^o = C^1(r) \cup C^2(r)$ and a set of \emph{auxiliary} vertices $V_r^a = C^0(r) \cup C^3(r)$.
The auxiliary graph $G_r$ results from $G$ by contracting the vertices in $V \setminus V_r$ as follows. All vertices that are not descendants of $r$ in $D(s)$ are contracted into $r$. For each vertex $w \in C^3(r)$, we contract all descendants of $w$ in $D(s)$ into $w$.  See Figure~\ref{fig:auxiliary}.
We use the same definition for the auxiliary graph $G_s$ of $s$, with the only difference that we let $s$ be an ordinary vertex. Also note that when we form $G_s$ from $G$, no vertex is contracted into $s$.
In order to bound the size of all auxiliary graphs, we eliminate parallel edges during those contractions.

\begin{lemma}
\label{lemma:auxiliary-graphs-size}
The auxiliary graphs $G_r$ have at most $4n$ vertices and $4m+n$ edges in total.
\end{lemma}
\begin{proof}
A vertex of $G$ may appear in at most four auxiliary graphs. Therefore, the total number of edges in all auxiliary graphs excluding type-(b) shortcut edges $(u,v)$ with $u \not \in V_r$ is at most $4m$. A type-(b) shortcut edge $(u,v)$ with $u \not \in V_r$ of $G_r$ corresponds to a unique vertex in $C^3(r)$, so there are at most $n$ such edges.
\end{proof}

\ignore{
For each vertex $r \not=s$ that is not a leaf in $D(s)$ we build the \emph{auxiliary graph $G_r = (V_r, E_r)$ of $r$} as follows. Let $C^k(r)$ denote the level $k$ descendants of $r$, i.e., $C^0(r)=\{r\}$, $C^1(r)=C(r)$, etc. The vertex set of $G_r$ is $V_r =\cup_{k=0}^{3} C^k(r)$ and it is partition into a set of \emph{ordinary} vertices  $V_r^o = C^1(r) \cup C^2(r)$ and a set of \emph{auxiliary} vertices $V_r^a = C^0(r) \cup C^3(r)$ .
The edge set $E_r$ contains all edges in $G=(V,E)$ induced by the vertices in $V_r$ (i.e., edges $(u,v)\in E$ such that $u \in V_r$ and $v \in V_r$). We also add in $E_r$ the following types of \emph{shortcut} edges that correspond to paths in $G$.
(a) If $G$ contains an edge $(u,v)$ such that $u \not\in V_r$ is a descendant of $r$ in $D(s)$ and $v \in V_r$ then we add the shortcut edge $(z,v)$ where $z$ the is an ancestor of $u$ in $D(s)$ such that $z \in C^3(r)$. (b) If $G$ contains an edge $(u,v)$ such that $u$ but not $v$ is a descendant of $r$ in $D(s)$ then we add the shortcut edge $(z,r)$ where $z$ the nearest ancestor of $u$ in $D(s)$ such that $z \in V_r$ ($z=u$ if $u \in V_r$). We note that we do not keep multiple (parallel) shortcut edges.  See Figure~\ref{fig:auxiliary}.
We use the same definition for the auxiliary graph $G_s$ of $s$, with the only difference that we let $s$ be an ordinary vertex. Also note that $G_s$ does not contain type-(b) shortcut edges.
}

\begin{figure}[t!]
\begin{center}
\begin{tabular}{cc}
\includegraphics[width=0.5\textwidth]{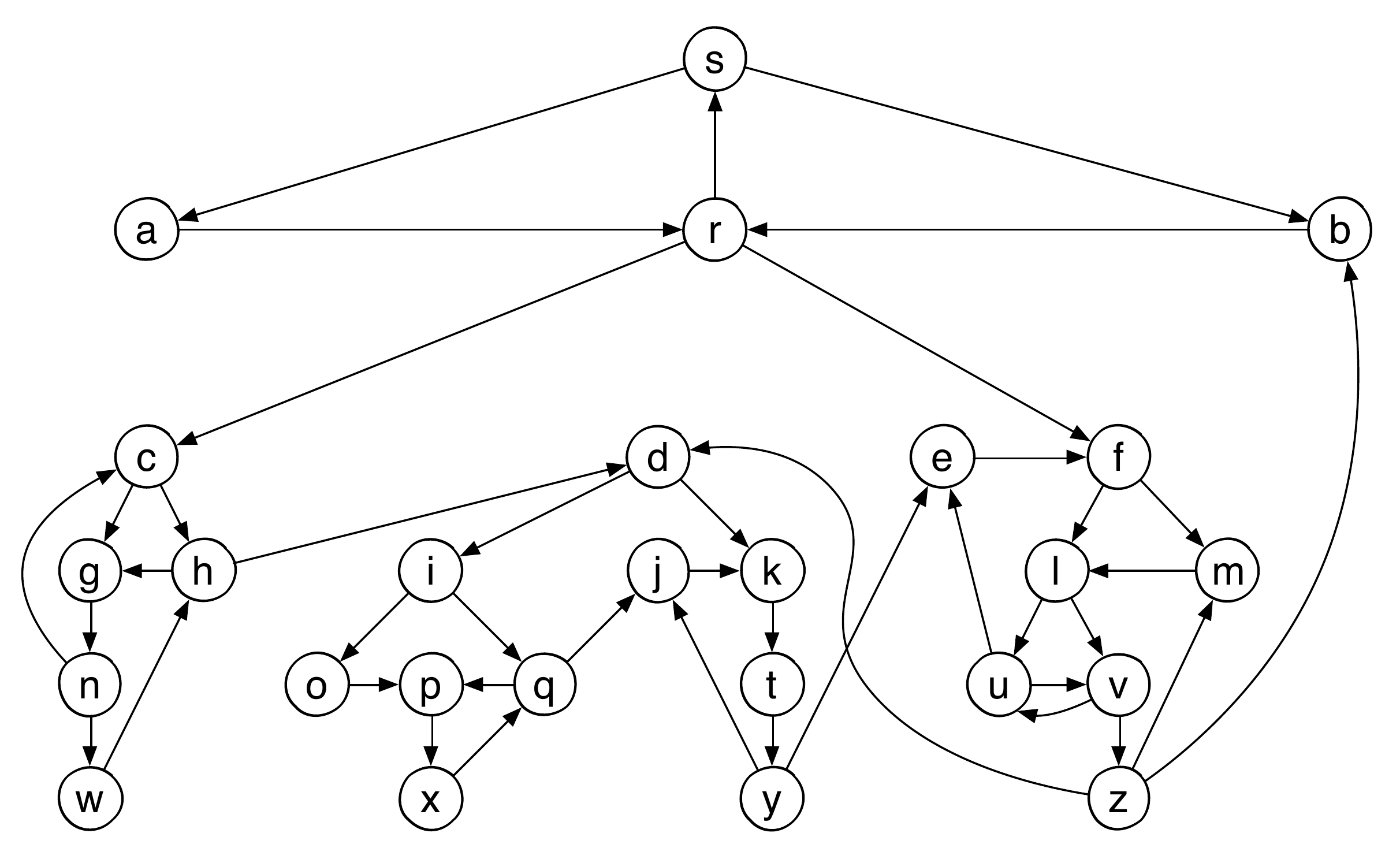} &
\includegraphics[width=0.5\textwidth]{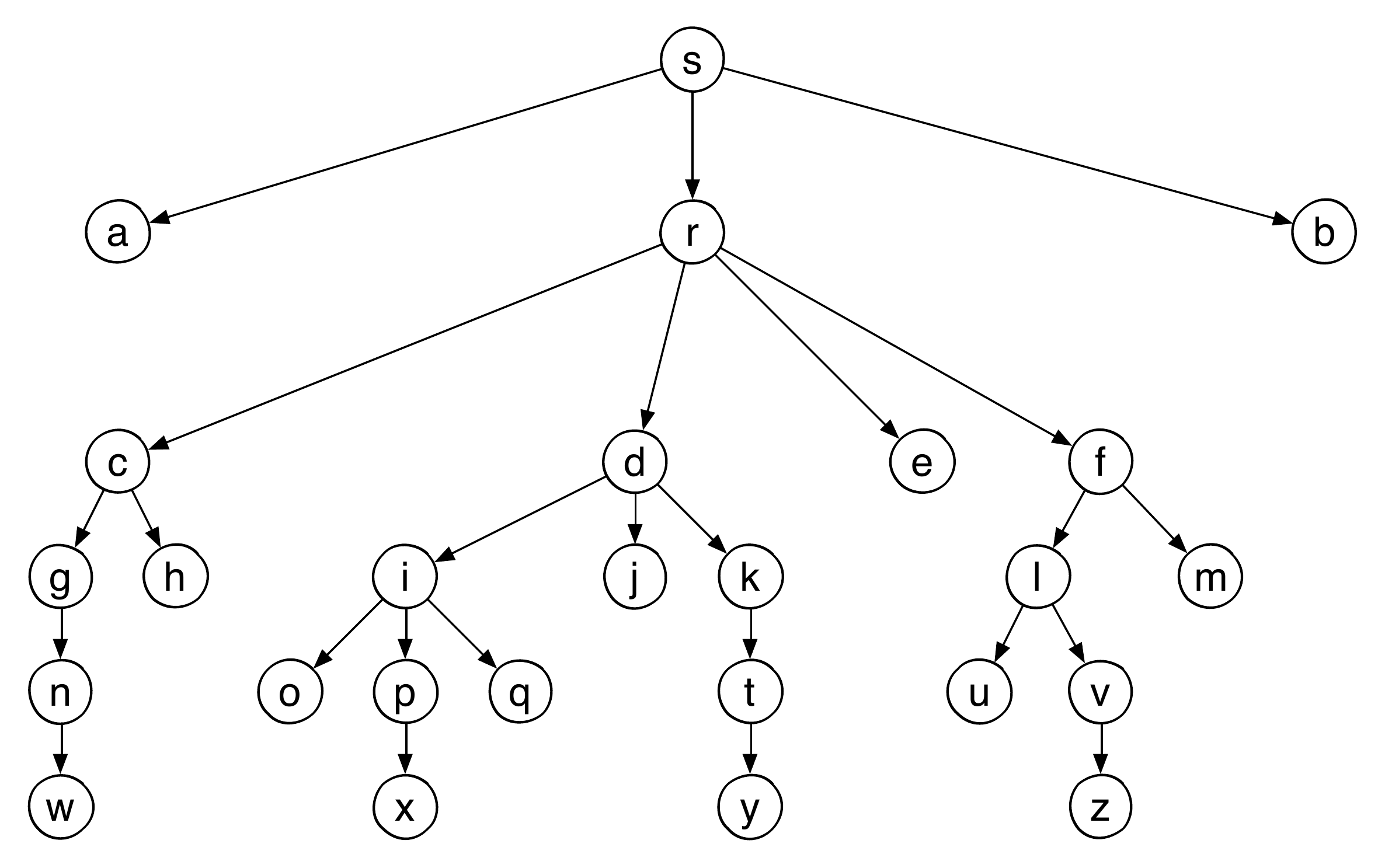}\\
$G$ & $D(s)$\\
\vspace{2ex}
\\
\includegraphics[width=0.5\textwidth]{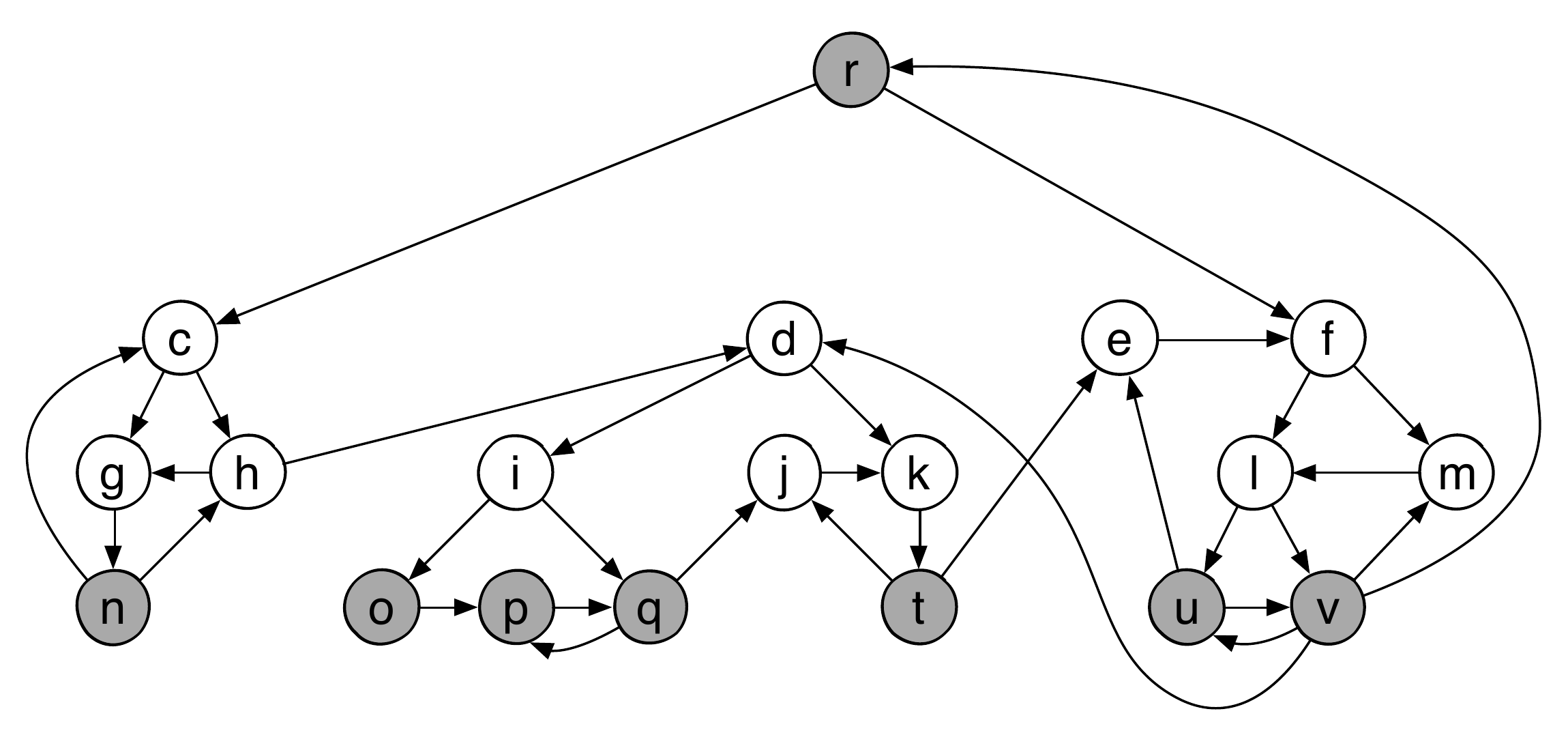} &
\includegraphics[width=0.23\textwidth]{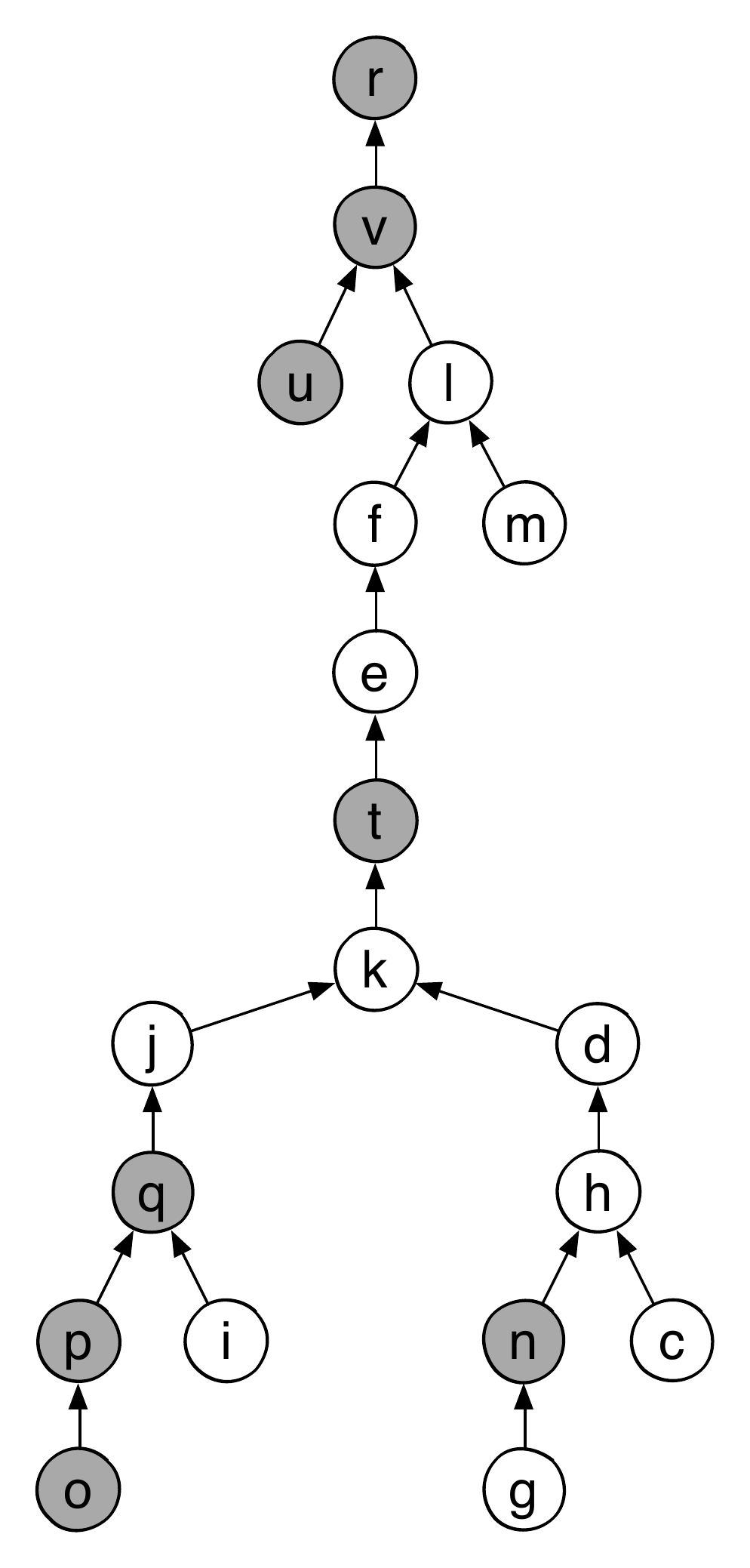}\\
$H=G_r$ & $D_H^R(r)$
\end{tabular}
\caption{A strongly connected graph $G$, the dominator tree $D(s)$ of flow graph $G(s)$, the auxiliary graph $H=G_r$ and the dominator tree $D_H^R(r)$ of the flow graph $H^R(r)$. (The edges of the dominator tree $D_H^R(r)$ are shown directed from child to parent.) The auxiliary vertices of $H$ are shown gray. \label{fig:auxiliary}}
\end{center}
\end{figure}

\begin{lemma}
\label{lemma:blocks-auxiliary-paths}
Let $v$ and $w$ be two vertices in $V_r$. Any path $P$ from $v$ to $w$ in $G$ has a corresponding path $P_r$ from $v$ to $w$ in $G_r$, and vice versa. Moreover, if $v$ and $w$ are both ordinary vertices in $G_r$, then $P_r$ contains a strong articulation point if and only if $P$ does.
\end{lemma}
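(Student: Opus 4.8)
The plan is to make the contraction underlying $G_r$ explicit as a map $\phi\colon V\to V_r$ and then analyze how $\phi$ acts on paths. Define $\phi(u)=u$ for $u\in V_r$, set $\phi(u)=r$ when $u$ is not a descendant of $r$ in $D(s)$, and set $\phi(u)=z$ when $u$ is a proper descendant of the vertex $z\in C^3(r)$ lying above it; these cases are exhaustive and disjoint since $V_r=\cup_{k=0}^{3}C^k(r)$. For the forward half of the first assertion I would take a path $P=(v=p_0,\dots,p_\ell=w)$ in $G$ and show that $(\phi(p_0),\dots,\phi(p_\ell))$ is a walk in $G_r$: for each edge $(p_i,p_{i+1})$ I check, using the parent property of $D(s)$ recalled in Section~\ref{sec:dominators}, that either $\phi(p_i)=\phi(p_{i+1})$ or $(\phi(p_i),\phi(p_{i+1}))$ is an edge of $G_r$ — an original edge when both ends lie in $V_r$, and otherwise one of the shortcut edges introduced in the construction of $G_r$. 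The structural fact doing the work is that $G$ has \emph{no} edge from a non-descendant of $r$ to a proper descendant of $r$ (such an edge would give an $s$-to-descendant path avoiding $r$, contradicting that $r$ dominates it), which eliminates every case that $\phi$ could not route through a shortcut. Deleting loops from the walk yields the desired $P_r$. For the converse I would expand each edge of $P_r$ back into $G$, replacing a shortcut by the path it abbreviates; here I use the auxiliary fact that whenever $z$ is an ancestor of $u$ in $D(s)$ there is a path from $z$ to $u$ through descendants of $z$ only (take the suffix, after the last visit to $z$, of any $s$-to-$u$ path and note that every vertex on it is dominated by $z$), together with Lemma~\ref{lemma:blocks-paths} to route from a non-descendant of $r$ back to $r$.

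For the strong-articulation-point assertion I would first record two structural facts. The vertex $r$ is a strong articulation point of $G$: it is a proper dominator of each of its children in $G(s)$ and $r\neq s$, so it is a non-trivial dominator in the sense of Italiano et al.~\cite{Italiano2012}; likewise every $z\in C^3(r)$ with a nonempty subtree is a strong articulation point. With these in hand, and assuming $v,w$ ordinary, I would prove the reduction: if a simple $P$ contains an internal strong articulation point, then it contains one lying in $V_r$. If the internal separator $x$ is already in $V_r$ we are done; if $x$ is a non-descendant of $r$, then $P$ must re-enter the subtree of $r$ to reach $w$, which by the no-entry fact above can happen only at $r$, so the strong articulation point $r\in V_r$ lies on $P$; and if $x$ is a descendant of some $z\in C^3(r)$ at level $\ge 4$, then by Lemma~\ref{lemma:blocks-paths} every path from $v$ to $x$ contains $z$, so the strong articulation point $z\in V_r$ lies on $P$ (and $z$ has a nonempty subtree, hence is a strong articulation point). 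In each case $r$ or $z$ is distinct from the ordinary endpoints $v,w$, so it is an internal vertex of $P$.

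Given the reduction, pick a strong articulation point $y\in V_r$ on $P$. If $y$ survives the loop-deletion it appears on $P_r$ and we are done; otherwise $y$ is excised inside a loop whose repeated endpoint is some contraction target $t\in\{r\}\cup C^3(r)$. Since only contraction targets can recur in $\phi(P)$, and such a $t$ must have absorbed at least one contracted vertex, $t$ has a nonempty subtree and is itself a strong articulation point, and being the loop endpoint it is retained by simplification; hence $P_r$ again contains one. The reverse implication is symmetric: a separator of $P_r$ that equals $r$ or a $z\in C^3(r)$ must actually occur on $P$ (again because $P$ can enter the subtree of $r$, respectively of $z$, only through that vertex), and any other separator of $P_r$ lies in $V_r$ and is retained by the expansion of shortcut edges. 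I expect the real difficulty to be precisely this interplay between contraction/simplification and the location of strong articulation points: one must argue that whenever a witnessing separator is merged away, a boundary separator ($r$ or a $C^3(r)$ vertex) takes its place, and one must carefully dispose of the degenerate cases — vertices of $C^3(r)$ that are leaves of $D(s)$ and so are not separators, and configurations where $r$ or a $z$ is adjacent to the endpoints. The edge-by-edge verification of the walk in the first paragraph is routine given the parent property; the separator-survival bookkeeping is where the care is genuinely required.
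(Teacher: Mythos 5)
Your proposal is correct and takes essentially the same route as the paper's proof: the core of both arguments is the identical three-way case analysis on where a strong articulation point $x$ of $P$ can lie ($x \in V_r$, $x$ a descendant of some $z \in C^3(r)$, or $x$ a non-descendant of $r$), combined with Lemma~\ref{lemma:blocks-paths} and the observation that $r$ and the non-leaf vertices of $C^3(r)$ are themselves strong articulation points. The extra machinery you supply---the explicit contraction map $\phi$, the within-subtree expansion of shortcut edges, and the loop-survival bookkeeping---just fills in, carefully and correctly, what the paper's proof compresses into ``by the construction of the auxiliary graph.''
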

\begin{proof}
The correspondence between paths in $G$ and paths in $G_r$ follows from the definition of the auxiliary graph. Next we prove the second part of the lemma. Let $P_r$ be the path in $G_r$ that corresponds to a path $P$ from $v$ to $w$ in $G$, where both $v$ and $w$ are ordinary vertices in $G_r$. By the construction of the auxiliary graph, we have that if $P_r$ contains a strong articulation point then so does $P$.
For the contraposition, suppose $P$ contains a strong articulation point $x$. Consider the following cases:
\begin{itemize}
\item $x \in V_r$.  Then, by the construction of the auxiliary graph, we have $x \in P_r$.
\item $x$ is a descendant of a vertex $z \in C^3(r)$. Vertex $z$ is a strong articulation point since it is either $x$ or a proper descendant of $x$.
Then, by Lemma  \ref{lemma:blocks-paths}, the part of $P$ from $v$ to $x$ contains $z$. So, $P_r$ also contains $z$ by the construction of the auxiliary graph.
\item $x$ is not a descendant of $r$. In this case, we have $r \not= s$. Since $v$ and $w$ are ordinary vertices of $G_r$, $C^1(r)$ is not empty and therefore $r$ is a strong articulation point.
By Lemma  \ref{lemma:blocks-paths}, the part of $P$ from $x$ to $w$ contains $r$. So, $P_r$ also contains $r$ by the construction of the auxiliary graph.
\end{itemize}
Hence, in every case $P_r$ contains a strong articulation point and the lemma follows.
\ignore{
Consider a path $P$ from $v$ to $w$ in $G$. We show that it has a corresponding path $P_r$ from $v$ to $w$ in $G_r$. If $P$ consists only of vertices in $G_r$ then $P_r=P$. Otherwise, let $(u,x)$ be the first edge on $P$ such that $u \in V_r$ and $x \not\in V_r$. Also let $(y,z)$ be the first edge on $P$ after $(u,x)$ such that $y \not\in V_r$ and $z \in V_r$.
Theorem \ref{theorem:parent-sibling} implies: (i) For the edge $(u,x)$ we have that either $u=d(x)$, or $x$ is not a descendant of $r$ in $D(s)$ and $d(x)$ is a proper ancestor of $r$. (ii) For the edge $(y,z)$ we have that either $y$ is not a descendant of $r$ and $z=r$, or $y$ is a descendant of a vertex in $C^3(r)$.

Suppose $u=d(x)$. Let $t$ be the first vertex on $P$ after $x$ that is not an descendant of $x$. If $t \in V_r$ then $t=z$. In this case the part of $P$ from $u$ to $z$ corresponds to the type-(a) edge $(u,z)$ in $P_r$. If $t \not \in V_r$ then $t$ is not a descendant of $r$ in $D(s)$, and by Lemma \ref{lemma:blocks-paths} we have that $z=r$. So the part of $P$ from $u$ to $z$ corresponds to the type-(b) edge $(u,r)$ in $P_r$. Now suppose that $x$ is not a proper descendant of $r$ in $D(s)$. Then Lemma \ref{lemma:blocks-paths} implies that $z=r$ so the part of $P$ from $u$ to $z=r$ corresponds to the edge $(u,r)$ in $P_r$. We can repeat the same argument for every part of $P$ that is outside $V_r$, which gives a valid path $P_r$ in $G_r$.

Now we prove that any path $P_r$ from $v$ to $w$ in $G_r$ has a corresponding path $P$ from $v$ to $w$ in $G$. It is sufficient to show that every edge $e=(x,y)$ on $P_r$ that is not an edge of $G$ has a corresponding path $P_e$ from $x$ to $y$ in $G$. In this case, $e$ is a type-(a) or a type-(b) shortcut edge in $G_r$. Suppose $e$ is of type (a). By the construction of $G_r$, $G$ has an edge $(z,y)$ where $z$ is a descendant of $x$. Then $G$ contains a path $Q$ from $x$ to $z$, so we have $P_e = Q \cdot (z,y)$. Now suppose that $e$ is of type (b).  By the construction of $G_r$, $y=r$ and $x$ is a descendant of $r$ in $D(s)$. Let $(z,t)$ be an edge of $G$ that corresponds to $e$. Then $z$ is the nearest ancestor of $x$ in $V_r$. Let $Q$ be a path from $z$ to $x$ in $G$. By Lemma \ref{lemma:blocks-paths}, there is a path $Q'$ in $G$ from $y$ to $r$ that does not contain any proper descendant of $r$. Then $P_e = Q \cdot Q'$.}
\end{proof}

\begin{corollary}
\label{corollary:auxialiry-graphs}
Each auxiliary graph $G_r$ is strongly connected.
\end{corollary}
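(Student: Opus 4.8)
The plan is to read the corollary as an immediate consequence of the path-correspondence established in Lemma~\ref{lemma:blocks-auxiliary-paths}, combined with the standing assumption that the input digraph $G$ is strongly connected. Concretely, I would fix two arbitrary vertices $v, w \in V_r$, which by definition of the auxiliary graph are exactly the vertices of $G_r$, and show that $G_r$ contains both a $v$-to-$w$ path and a $w$-to-$v$ path; since $v$ and $w$ are arbitrary, this establishes strong connectivity of $G_r$.

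First I would note that every vertex of $V_r$ is also a genuine vertex of $G$: the ordinary vertices $C^1(r)\cup C^2(r)$ are untouched by the contractions, while each auxiliary vertex (namely $r$ itself and each $w \in C^3(r)$) survives as the representative of the set of vertices contracted into it. Hence $v$ and $w$ are honest vertices of $G$ as well. Because $G$ is strongly connected, there exists a path $P$ from $v$ to $w$ in $G$ and a path $P'$ from $w$ to $v$ in $G$.

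Next I would invoke the first assertion of Lemma~\ref{lemma:blocks-auxiliary-paths}, which guarantees that for any two vertices of $V_r$, every path between them in $G$ has a corresponding path between them in $G_r$. Applying this to $P$ and to $P'$ yields a $v$-to-$w$ path $P_r$ and a $w$-to-$v$ path $P'_r$ in $G_r$. Since $v$ and $w$ were arbitrary vertices of $G_r$, the digraph $G_r$ is strongly connected, which is exactly the claim.

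I do not expect a genuine obstacle here, as the corollary is designed to fall out of the preceding lemma; the only point worth checking carefully is that the path-correspondence statement applies uniformly to \emph{all} vertices of $V_r$, including the auxiliary (contracted) vertices $r$ and the members of $C^3(r)$, rather than only to ordinary vertices. The second half of Lemma~\ref{lemma:blocks-auxiliary-paths} (about strong articulation points) is restricted to ordinary endpoints, but the first half, which is all that is needed here, is stated for arbitrary $v, w \in V_r$, so no special-casing of auxiliary endpoints is required.
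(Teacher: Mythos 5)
Your proposal is correct and follows exactly the paper's own (one-line) argument: the paper proves this corollary by citing the construction of $G_r$, Lemma~\ref{lemma:blocks-auxiliary-paths}, and the strong connectivity of $G$, which is precisely the argument you spell out in detail. Your added care about the path-correspondence applying to auxiliary as well as ordinary vertices is a valid and worthwhile check, not a deviation.
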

\begin{proof}
Follows from the construction of $G_r$, Lemma \ref{lemma:blocks-auxiliary-paths}, and the fact that $G$ is strongly connected.
\end{proof}

The next lemma shows that auxiliary graphs maintain the vertex-resilient relation of the original digraph.

\begin{lemma}
\label{lemma:vertex-resilient-auxiliary}
Let $v$ and $w$ be any two distinct vertices of $G$. Then $v$ and $w$ are vertex-resilient in $G$ if and only if they are both ordinary vertices in an auxiliary graph $G_r$ and they are vertex-resilient in $G_r$.
\end{lemma}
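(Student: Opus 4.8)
The statement is an ``if and only if'' whose right-hand side carries an existential quantifier over $r$, so the plan is to prove three things: (a) if $v \leftrightarrow_{\mathrm{vr}} w$ in $G$ then there is an \emph{eligible} auxiliary graph $G_r$ in which both $v$ and $w$ are ordinary; (b) whenever $v$ and $w$ are both ordinary in some $G_r$, vertex-resilience in $G$ implies vertex-resilience in $G_r$; and (c) conversely, vertex-resilience in $G_r$ implies vertex-resilience in $G$. For (a) I would invoke Lemma \ref{lemma:vertex-resilient-necessary}: if $v \leftrightarrow_{\mathrm{vr}} w$ then in $D(s)$ either $v$ and $w$ are siblings or one is the immediate dominator of the other. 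In the sibling case, letting $p$ be their common parent puts $v, w \in C^1(p)$, so both are ordinary in $G_p$ (using $G_s$ if $p = s$). In the immediate-dominator case, say $v = d(w)$, taking $r = d(v)$ (or $r = s$ when $v = s$) puts $v \in C^1(r)$ and $w \in C^2(r)$, again both ordinary. In every subcase the relevant $r$ is $s$ or a non-leaf vertex, hence $G_r$ is actually built.

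For parts (b) and (c) the engine is the path correspondence of Lemma \ref{lemma:blocks-auxiliary-paths}, refined by two observations about how vertices move under the correspondence. First, converting a $G$-path $P$ between two $V_r$-vertices into its $G_r$-path $P_r$ never creates vertices: every vertex of $P_r$ already lies on $P$. This follows from the construction together with Lemma \ref{lemma:blocks-paths}, since each shortcut edge of $G_r$ has both of its endpoints (namely $r$, a vertex of $C^3(r)$, or an endpoint of the contracted excursion) lying on $P$, because re-entering the descendants of $r$ forces a visit to $r$, and entering the subtree of a $C^3(r)$ vertex $z$ forces a visit to $z$. Second, expanding a shortcut edge of $G_r$ back into a $G$-path only introduces vertices of $V \setminus V_r$ (proper descendants of a $C^3(r)$ vertex, or non-descendants of $r$). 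These two facts immediately give (b): every $x \in V_r \setminus \{v,w\}$ is a genuine vertex of $G$, so a $G$-path from $v$ to $w$ avoiding $x$ (which exists since $v \leftrightarrow_{\mathrm{vr}} w$ in $G$) converts to a $G_r$-path avoiding $x$; applying this in both directions shows $v \leftrightarrow_{\mathrm{vr}} w$ in $G_r$.

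For part (c) I would fix an arbitrary $x \in V \setminus \{v,w\}$ and produce $G$-paths from $v$ to $w$ and from $w$ to $v$ that avoid $x$. If $x \in V_r$, vertex-resilience in $G_r$ supplies a $G_r$-path avoiding $x$, which expands to a $G$-path avoiding $x$ by the second observation. The interesting case is $x \notin V_r$, where $x$ has been contracted: either $x$ is not a descendant of $r$, in which case I would take a $G_r$-path from $v$ to $w$ that avoids the auxiliary vertex $r$ (available since $v \leftrightarrow_{\mathrm{vr}} w$ in $G_r$ and $r \notin \{v,w\}$); avoiding $r$ means it uses no shortcut edge incident to $r$, so its expansion stays among descendants of $r$ and hence avoids $x$. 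Otherwise $x$ is a proper descendant of some $z \in C^3(r)$, and I would take a $G_r$-path avoiding $z$ (again $z \notin \{v,w\}$); avoiding $z$ means using no shortcut edge incident to $z$, so the expansion never enters the subtree of $z$ and therefore avoids $x$. (When $G_r = G_s$ the first alternative is vacuous, since every vertex is a descendant of $s$, so no case is lost.) Symmetric arguments handle the reverse paths, which together establish that $v$ and $w$ are strongly connected in $G \setminus x$, completing (c).

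The main obstacle is exactly part (c) for external vertices $x \notin V_r$: one must argue that a single contracted vertex can always be routed around by instead deleting, inside $G_r$, the unique auxiliary vertex that ``represents'' its contracted region (namely $r$ or the $C^3(r)$-ancestor $z$), and that vertex-resilience of $v$ and $w$ in $G_r$ guarantees an alternate route avoiding that representative. Making this precise relies on the dominator-based fact (Lemma \ref{lemma:blocks-paths}) that the representative dominates all entries into its contracted region, which is precisely what lets the shortcut-expansion dodge $x$.
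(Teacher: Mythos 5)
Your proposal is correct and takes essentially the same approach as the paper: ordinariness of $v$ and $w$ via Lemma \ref{lemma:vertex-resilient-necessary}, transfer of resilience between $G$ and $G_r$ through the path correspondence of Lemma \ref{lemma:blocks-auxiliary-paths}, and the identical three-way case analysis on the deleted vertex ($x \in V_r$, $x$ a proper descendant of some $z \in C^3(r)$, $x$ not a descendant of $r$), with $z$ or $r$ acting as the representative separator, justified by Lemma \ref{lemma:blocks-paths}. The only difference is presentational: you construct avoiding paths directly in each direction, while the paper argues the contrapositives by showing that a common separating vertex in one graph yields a common separating vertex (itself, its $C^3(r)$-ancestor, or $r$) in the other.
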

\begin{proof}
Suppose first that $v$ or $w$ is $s$. Without loss of generality assume $v=s$. Then by Lemma \ref{lemma:vertex-resilient-necessary} we have that $w \in C^1(r)$, so $v$ and $w$ are both ordinary vertices of $G_s$.
Now consider that $v, w \in V\setminus s$.
From Lemma \ref{lemma:vertex-resilient-necessary} we have that $v$ and $w$ belong in a set $C^1(r) \cup C^2(r)$ so they are both ordinary vertices of $G_r$.
Clearly if all paths from $v$ to $w$ in $G_r$ contain a common vertex (strong articulation point), then so do all paths from $v$ to $w$ in $G$ by Lemma \ref{lemma:blocks-auxiliary-paths}. Now we prove the converse. Suppose all paths from  $v$ to $w$ in $G$ contain a common vertex $u$. If $u \in V_r$ then also all paths from $v$ to $w$ in $G_r$ contain $u$ by the proof of Lemma \ref{lemma:blocks-auxiliary-paths}.
So suppose $u \not \in V_r$. Then $v$ is not an ancestor of $w$ in $D(s)$, since otherwise there would be a path from $v$ to $w$ that avoids $u$.

First consider that $u$ is a (proper) descendant of $r$ in $D(s)$. Since $v$ is not an ancestor of $w$ in $D(s)$, there is a vertex $x \in C^3(r)$ that is an ancestor of $u$. By Lemma \ref{lemma:blocks-paths}, all paths from $v$ to $u$ in $G$, and thus all paths from $v$ to $w$, contain $x$. By Lemma \ref{lemma:blocks-auxiliary-paths} this is also true for all paths from $v$ to $w$ in $G_r$.

Finally, if $u$ is not a descendant of $r$, Lemma \ref{lemma:blocks-paths} implies that all paths from $u$ to $w$ in $G$ contain vertex $r$. Hence, all paths from $v$ to $w$ in $G$ contain $r$, and so do all paths from $v$ to $w$ in $G_r$ by Lemma \ref{lemma:blocks-auxiliary-paths}.
\end{proof}

\ignore{
\begin{lemma}
\label{lemma:auxiliary-graphs-size}
The auxiliary graphs $G_r$ have at most $4n$ vertices and $4m+n$ edges in total.
\end{lemma}
\begin{proof}
A vertex of $G$ may appear in at most four auxiliary graphs. Therefore, the total number of edges in all auxiliary graphs excluding type-(b) shortcut edges $(u,v)$ with $u \not \in V_r$ is at most $4m$. A type-(b) shortcut edge $(u,v)$ with $u \not \in V_r$ of $G_r$ corresponds to a unique vertex in $C^3(r)$, so there are at most $n$ such edges.
\end{proof}
}

Now we specify how to compute all the auxiliary graphs $G_r=(V_r,E_r)$ in $O(m+n)$ time. Observe that the edge set $E_r$ contains all edges in $G=(V,E)$ induced by the vertices in $V_r$ (i.e., edges $(u,v)\in E$ such that $u \in V_r$ and $v \in V_r$). We also add in $E_r$ the following types of \emph{shortcut} edges that correspond to paths in $G$.
(a) If $G$ contains an edge $(u,v)$ such that $u \not\in V_r$ is a descendant of $r$ in $D(s)$ and $v \in V_r$ then we add the shortcut edge $(z,v)$ where $z$ the is an ancestor of $u$ in $D(s)$ such that $z \in C^3(r)$. (b) If $G$ contains an edge $(u,v)$ such that $u$ but not $v$ is a descendant of $r$ in $D(s)$ then we add the shortcut edge $(z,r)$ where $z$ the nearest ancestor of $u$ in $D(s)$ such that $z \in V_r$ ($z=u$ if $u \in V_r$). We note that we do not keep multiple (parallel) shortcut edges.  See Figure~\ref{fig:auxiliary}.
We use the same definition for the auxiliary graph $G_s$ of $s$, with the only difference that we let $s$ be an ordinary vertex. We also note that $G_s$ does not contain type-(b) shortcut edges.

To construct the auxiliary graphs $G_r = (V_r, E_r)$ we need to specify how to compute the shortcut edges of type (a) and (b). To do this efficiently we need to test ancestor-descendant relations in $D(s)$. There are several simple $O(1)$-time tests of this relation~\cite{domin:tarjan}. The most convenient one for us is to number the vertices of $D(s)$ from $1$ to $n$ in preorder, and to compute the number of descendants of each vertex. Then, vertex $v$ is a descendant of $r$ if and only if $\mathit{pre}(r) \le \mathit{pre}(v) < \mathit{pre}(r) + \mathit{size}(r)$, where, for any vertex $x$, $\mathit{pre}(x)$ and $\mathit{size}(x)$ are, respectively, the preorder number and the number of descendants of $x$ in $D(s)$.

Suppose $(u,v)$ is an edge of type (a). We need to find the ancestor $z$ of $u$ in $D(s)$ such that $z \in C^3(r)$. We process all such arcs of $G_r$ as follows. We create a list $B_r$ that contains the edges $(u,v)$ of type (a), and sort $B_r$ in increasing preorder of $u$. We create a second list $B'_r$ that contains the vertices in $C^3(r)$, and sort $B'_r$ in increasing preorder. Then, the shortcut edge of $(u,v)$ is $(z,v)$, where $z$ is the last vertex in the sorted list $B'_r$ such that $\mathit{pre}(z) \le \mathit{pre}(u)$. Thus the shortcut edges of type (a) can be computed in linear time by bucket sorting and merging.
Now we consider the edges of type (b). For each vertex $w \in C^3(r)$ we need to test if there is an edge $(u,v)$ in $G$ such that $u$ is a proper descendant of $w$ and $v$ is not a descendant of $r$ in $D(s)$. In this case, we add in $G_r$ the edge $(w,r)$. To do this test efficiently, we assign to each edge $(u,v)$ a tag $t(u,v)$ which we set equal to the preorder number of the nearest common ancestor of $u$ and $v$ in $D(s)$. We can do this easily by using the parent property and the $O(1)$-time test of the ancestor-descendant relation as follows: $t(u,v) = \mathit{pre}(u)$ if $u$ is an ancestor of $v$ in $D(s)$, $t(u,v) = \mathit{pre}(v)$ if $v$ is an ancestor of $u$ in $D(s)$, and $t(u,v)=\mathit{pre}(d(v))$ otherwise. At each vertex $w \not= s$ in $D(s)$ we store a label $\ell(w)$ which is the minimum tag of among the edges $(w,v)$. Using these labels we compute for each $w \not= s$ in $D(s)$ the values $\mathit{low}(w) = \min\{ \ell(v) \ | \ v \mbox{ is a descendant of } w \mbox{ in } D(s) \}$. These computations can be done in $O(m)$ time by processing the tree $D(s)$ in a bottom-up order. Now consider the auxiliary graph $G_r$. We process the vertices in $C^3(r)$. For each such vertex $w$ we add the shortcut edge $(w,r)$ if $\mathit{low}(w) < \mathit{pre}(r)$.

\begin{lemma}
\label{lemma:auxiliary-graphs-construction}
We can compute all auxiliary graphs $G_r$ in $O(m)$ time.
\end{lemma}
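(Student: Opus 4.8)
The plan is to show that each ingredient of the construction described just above can be carried out with a single global preprocessing pass, followed by per-graph work whose total cost telescopes to $O(m+n)$; since $G$ is strongly connected we have $m \ge n$, so $O(m+n)=O(m)$, and in fact Lemma~\ref{lemma:auxiliary-graphs-size} already caps the combined output at $O(m+n)$ edges and $O(n)$ vertices, which bounds any work proportional to the output.

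First I would do all the global preprocessing once. Compute $D(s)$ in linear time using one of the linear-time dominator algorithms cited in Section~\ref{sec:dominators}, and assign to every vertex its preorder number $\mathit{pre}(\cdot)$, its subtree size $\mathit{size}(\cdot)$, and its depth, so that both the ancestor--descendant test and the membership test ``$u\in V_r$'' run in $O(1)$. Within the same linear budget, compute the tags $t(u,v)$, the labels $\ell(w)$, and then the values $\mathit{low}(w)$ by the bottom-up sweep of $D(s)$ described above; each edge and each vertex is touched a constant number of times, so this phase is $O(m)$.

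The crux is to realize the sorts without paying per graph. A separate bucket sort for each $r$ ranges over $[1,n]$ and would cost $\Theta(n)$ per $r$, i.e.\ $\Theta(n^2)$ overall, so instead I would sort once globally and then distribute. For type-(a) shortcuts, bucket sort all edges of $G$ by $\mathit{pre}(u)$ in $O(m+n)$ time; scanning the edges in this order, append each edge $(u,v)$ to the list $B_r$ of every auxiliary graph in which it is a type-(a) edge. Since $v\in V_r$ forces $r$ to be one of the at most four nearest ancestors of $v$ in $D(s)$ (found by following parent pointers at most three steps), each edge is appended to $O(1)$ lists, and because we scan in global $\mathit{pre}(u)$-order, every $B_r$ emerges already sorted. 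A single global bucket sort of the vertices likewise yields all the sorted lists $B'_r=C^3(r)$, whose sizes sum to $O(n)$ because each vertex belongs to $C^3(r)$ for exactly one $r$. Merging $B_r$ with $B'_r$ locates the correct $z\in C^3(r)$ for each edge in $O(|B_r|+|B'_r|)$ time, and these costs sum to $O(m+n)$. Type-(b) shortcuts are cheaper: for each $r$ I scan $C^3(r)$ and emit $(w,r)$ exactly when $\mathit{low}(w)<\mathit{pre}(r)$, for a total of $\sum_r |C^3(r)|=O(n)$. The induced edges are gathered while scanning adjacency lists, and a final bucket sort on endpoint pairs deletes parallel edges; by Lemma~\ref{lemma:auxiliary-graphs-size} the output size is $O(m+n)$, so this cleanup is $O(m+n)$ as well. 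Summing the preprocessing, the distribute-and-merge for type-(a) edges, the type-(b) scan, and the de-duplication gives $O(m+n)=O(m)$. I expect the genuine obstacle to be exactly this sorting bottleneck: the construction as literally stated sorts $B_r$ and $B'_r$ per $r$, and the whole point of the argument is to collapse all of these into one global bucket sort plus an order-preserving distribution, so that no single $G_r$ is ever charged the $\Theta(n)$ range cost.
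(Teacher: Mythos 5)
Your proof is correct and follows essentially the same approach as the paper: the paper states this lemma without a separate proof, relying on the construction described immediately before it (preorder/size-based $O(1)$ ancestor tests, the sorted lists $B_r$ and $B'_r$ merged to produce type-(a) shortcuts, and the tags $t(u,v)$, labels $\ell(w)$, and values $\mathit{low}(w)$ computed bottom-up for type-(b) shortcuts), which is exactly what you implement. Your one genuine addition --- replacing the per-$r$ bucket sorts by a single global bucket sort followed by an order-preserving distribution into the lists $B_r$ and $B'_r$, so that no auxiliary graph is charged a $\Theta(n)$ bucket-range cost --- is a detail the paper leaves implicit under ``bucket sorting and merging,'' and you resolve it correctly.
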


\subsection{Algorithm}
\label{sec:algorithm}

Our linear-time algorithm \textsf{FastVRB} is illustrated in Figure \ref{fig:FastVRB}. It uses two levels of auxiliary graphs and applies one iteration of Algorithm \textsf{SimpleVRB} for each auxiliary graph of the second level.
The algorithm uses different dominator trees, and applies Lemma \ref{lemma:vertex-resilient-necessary} in order to identify the vertex-resilient blocks. Since different dominator trees may define different blocks (which by Lemma \ref{lemma:vertex-resilient-necessary} are supersets of the vertex-resilient blocks), we will use an operation that we call $\mathit{split}$ to combine the different blocks.

We begin by computing the dominator tree $D(s)$ for an arbitrary start vertex $s$. For any vertex $v$, we let $\hat{C}(v)$ denote the set containing $v$ and the children of $v$ in $D(s)$, i.e., $\hat{C}(v)=C(v) \cup \{v\}$. Lemma \ref{lemma:vertex-resilient-necessary} gives an initial division of the vertices into blocks that are supersets of the vertex-resilient blocks. Specifically, the vertex-resilient blocks that contain $v$ are subsets of $\hat{C}(v)$ or $\hat{C}(d(v))$ (for $v \not= s$).

During the course of the algorithm, each vertex $v$ becomes associated with a set of blocks $\mathcal{B}(v)$ that contain $v$, which are subsets of $\hat{C}(v)$ and $\hat{C}(d(v))$ if $v \not= s$. The blocks are refined by applying the $\mathit{refine}$ operation of Section \ref{section:vr-simple} and operation $\mathit{split}$ that we define next, and at the end of the algorithm each set of blocks $\mathcal{B}(v)$ will be equal to $\mathit{VRB}(v)$.

Let $B$ be a block and $T$ be a tree with vertex set $V(T) \supseteq B$. For any vertex $v \in V(T)$, let $\hat{C}_T(v)$ be the set containing $v$ and the children of $v$ in $T$.
\begin{list}{}{
\setlength{\leftmargin}{1.7cm} \setlength{\labelsep}{.2cm} \setlength{\itemsep}{0cm}\setlength{\labelwidth}{1.5cm}
}
\item[$\mathit{split}(B, T)$:] Return the set that consists of the blocks $B \cap \hat{C}_T(v)$ of size at least two, for all $v \in V(T)$.
\end{list}

\begin{lemma}
\label{lemma:split}
Let $N$ be the number of vertices in $V(T)$. Then, the operation $\mathit{split}(B, T)$ can be executed in $O(N)$ time.
\end{lemma}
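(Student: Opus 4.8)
The plan is to exploit the fact that the sets $\hat{C}_T(v)$, as $v$ ranges over $V(T)$, overlap in a very controlled way: a fixed vertex $u$ lies in $\hat{C}_T(v)$ precisely when $v=u$ or when $u$ is a child of $v$ in $T$. Hence $u$ belongs to at most two of these sets, and consequently to at most two of the intersections $B \cap \hat{C}_T(v)$. This immediately bounds the total size of all the returned sets by $2|B| \le 2N$, so a linear-time computation is plausible provided we never explicitly form an empty intersection.

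Concretely, I would first set up a bucket array indexed by the vertices of $V(T)$ (assume they are numbered $1,\dots,N$), with one initially empty list per vertex; this initialization costs $O(N)$. I would then scan the elements of $B$ once. For each $u \in B$, I append $u$ to the bucket of $u$ (handling the case $v=u$) and, if $u$ is not the root of $T$, I also append $u$ to the bucket of its parent $d_T(u)$ in $T$ (handling the case where $u$ is a child of $v$). Since parent pointers in $T$ supply $d_T(u)$ in $O(1)$ time and each $u$ triggers at most two appends, this pass runs in $O(|B|)$ time, and upon completion the bucket of each vertex $v$ holds exactly $B \cap \hat{C}_T(v)$: indeed it contains $v$ itself when $v \in B$, together with all children of $v$ in $T$ that lie in $B$, which is precisely $B \cap \hat{C}_T(v)$.

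Finally I would traverse the at most $2|B|$ nonempty buckets, emit those of size at least two as the blocks returned by $\mathit{split}(B,T)$, and discard the rest; this pass is again $O(|B|)$. Summing the $O(N)$ array initialization with the two $O(|B|)$ passes, and using $B \subseteq V(T)$ so that $|B| \le |V(T)| = N$, yields the claimed $O(N)$ bound.

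I do not expect a genuine obstacle here; the point requiring care is organizational rather than mathematical. The observation that each vertex lies in at most two of the sets $\hat{C}_T(v)$ is exactly what prevents a naive quadratic blowup from forming all $|V(T)|$ intersections separately, and the bucketing merely realizes this observation. The only implementation detail worth flagging is the $O(N)$ cost of allocating and clearing the bucket array; if several $\mathit{split}$ operations are performed in sequence on subtrees, one can reset only the $O(|B|)$ buckets actually touched and thereby obtain an amortized $O(|B|)$ cost, but for the statement as given the plain $O(N)$ initialization already suffices.
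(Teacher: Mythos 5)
Your proposal is correct and is essentially the paper's own argument: the paper assigns each $v \in B$ the two labels $\mathit{pre}(t(v))$ and $\mathit{pre}(v)$ (one label for the root) and bucket-sorts the resulting label--vertex pairs, which is exactly your bucketing by self and by parent, with preorder numbers playing the role of your $1,\dots,N$ vertex indices. The key observation in both cases is that each element of $B$ lies in at most two of the sets $\hat{C}_T(v)$, so the total work is $O(N + |B|) = O(N)$.
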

\begin{proof}
We number the vertices of $T$ in preorder. Let $\mathit{pre}(v)$ be the preorder number of $v \in V(T)$. Let $t(v)$ be the parent of $v \not= r$ in $T$, where $r$ is the root of $T$. We associate each vertex $v \not= r$ in $B$ with two labels $\ell_1(v) = \mathit{pre}(t(v))$ and $\ell_2(v) = \mathit{pre}(v)$, and create two corresponding pairs $\left<\ell_1(v), v\right>$ and $\left<\ell_2(v), v\right>$. Also, if $r \in B$, we associate $r$ with one label $\ell_2(r) = \mathit{pre}(r)$, and create a corresponding pair $\left<\ell_2(r),r\right>$.  Each block created by the $\mathit{split}$ operation consists of a set of at least two vertices $v \in B$ that are associated with a specific label. We can find these blocks by sorting the pairs $\left<\ell_j(v),v\right>$ by label, which can be done in $O(N)$ time with bucket sort.
\end{proof}

\begin{figure}[t!]
\begin{center}
\fbox{
\begin{minipage}[h]{\textwidth}
\begin{center}
\textbf{Algorithm \textsf{FastVRB}: Linear-time computation of the vertex-resilient blocks of a strongly connected digraph $G=(V,E)$}
\end{center}
\begin{description}\setlength{\leftmargin}{10pt}
\item[Step 1:] Choose an arbitrary vertex $s \in V$ as a start vertex. Compute the dominator tree $D(s)$. For any vertex $v$, let $\hat{C}(v)$ be the set containing $v$ and the children of $v$ in $D(s)$. For every vertex $v$ that is not a leaf in $D(s)$, associate block $\hat{C}(v)$ with every vertex $w \in \hat{C}(v)$.
\item[Step 2:] Compute the auxiliary graphs $G_r$ for all vertices $r$ that are not leaves in $D(s)$.
\item[Step 3:] Process the vertices of $D(s)$ in bottom-up order. For each auxiliary graph $H = G_r$ with $r$ not a leaf in $D(s)$ do:
    \begin{description}\setlength{\leftmargin}{10pt}
    \item[Step 3.1:] Compute the dominator tree $T=D_H^R(r)$. 
    \item[Step 3.2:] Compute the set $\mathcal{B}$ of blocks that contain vertices in $C(r)$.
    \item[Step 3.3:] For each block $B \in \mathcal{B}$ execute $\mathit{split}(B, T)$.
    \item[Step 3.4:] Compute the auxiliary graphs $H^R_q$ for all vertices $q$ that are not leaves in $T$.
    \item[Step 3.5:] For each auxiliary graph $H^R_q$ with $q$ not a leaf do:
        \begin{description}\setlength{\leftmargin}{10pt}
        \item[Step 3.5.1:] Compute the set $\mathcal{B}_q$ of blocks that contain at least two ordinary vertices in $H^R_q$.
        \item[Step 3.5.2:] Compute the set $\mathcal{S}$ of the strongly connected components of $H^R_q\setminus q$.
        \item[Step 3.5.3:] Refine the blocks in $\mathcal{B}_q$ by executing $\mathit{refine}(\mathcal{B}_q, \mathcal{S}, q)$.
        \end{description}
    \end{description}
\end{description}
\end{minipage}
}
\caption{Algorithm \textsf{FastVRB}\label{fig:FastVRB}}
\end{center}
\end{figure}

At a high level, the algorithm begins with a ``coarse'' block tree, induced by the $\hat{C}(v)$ sets of $D(s)$, which is then refined by the blocks defined from the dominator trees of the auxiliary graphs. The final vertex-resilient block forest is then computed by considering the strongly connected components of the second level auxiliary graphs, after removing their designated start vertex.
The algorithms needs to keep track of the blocks that contain a specific vertex, and, conversely, of the vertices that are contained in a specific block. To facilitate this search we explicitly store the adjacency lists of the current block forest $F$. Recall that $F$ is bipartite, so the adjacency list of a vertex $v$ stores the blocks that contain $v$, and the adjacency list of a block node $B$ stores the vertices in $B$. Initially $F$ contains one block for each set $\hat{C}(v)$, for all vertices $v$ that are not leaves in $D(s)$. These blocks are later refined by executing the $\mathit{split}$ and $\mathit{refine}$ operations, which maintain the invariant that $F$ is a forest, and that any two distinct blocks have at most two vertices in common. When we execute a $\mathit{split}$ or a $\mathit{refine}$ operation we can update the adjacency lists of $F$, while maintaining the bounds given in Lemmas \ref{lemma:refine} and \ref{lemma:split}.
Also, since during the execution of the algorithm the number of blocks can only increase, $F$ contains at most $n-1$ blocks at any given time. This fact implies that Lemma \ref{lemma:blocks-number} holds, so the total number of vertices and edges in $F$ is $O(n)$.

\begin{lemma}
\label{lemma:FastVRB-correctness}
Algorithm \textsf{FastVRB} is correct.
\end{lemma}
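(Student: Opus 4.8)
The plan is to show that, upon termination, two vertices lie in a common block of the final forest $F$ if and only if they are vertex-resilient; equivalently, that each $\mathcal{B}(v)$ equals $\mathit{VRB}(v)$. I would establish this through two complementary invariants maintained across every \textit{split} and \textit{refine} operation: (\emph{soundness}) no vertex-resilient pair is ever separated, so the current blocks always remain supersets of the true vertex-resilient blocks; and (\emph{completeness}) every pair that is \emph{not} vertex-resilient but starts in a common block is eventually separated. The starting point is Lemma~\ref{lemma:vertex-resilient-necessary}: any vertex-resilient pair consists of siblings or of a parent and child in $D(s)$, and a short argument on depths shows that each vertex-resilient block is contained in a single set $\hat{C}(r)$. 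Hence the initial blocks of Step~1 are supersets of the vertex-resilient blocks, and it remains to track the refinements.

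For soundness I would argue operation by operation. Fix a vertex-resilient pair $u \leftrightarrow_{\mathrm{vr}} w$. When $\mathit{split}(B,T)$ is applied with $T = D_H^R(r)$ and $u,w \in B$, Lemma~\ref{lemma:vertex-resilient-auxiliary} gives that the pair is vertex-resilient in $H = G_r$, so Lemma~\ref{lemma:vertex-resilient-necessary} applied to the dominator tree $T$ of $H^R(r)$ shows they are siblings or in a parent--child relation in $T$; thus they lie in a common $\hat{C}_T(q)$ and survive the split. When $\mathit{refine}(\mathcal{B}_q,\mathcal{S},q)$ is applied, I would transfer vertex-resilience to the second level: by Lemma~\ref{lemma:vertex-resilient-auxiliary} (at both levels) together with the symmetry of vertex-resilience under edge reversal, $u$ and $w$ are vertex-resilient in $H^R_q$; since $q$ is a single vertex of $H^R_q$, resilience to its removal means $u$ and $w$ lie in the same strongly connected component of $H^R_q \setminus q$, so they stay together under the refinement.

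For completeness, suppose $u$ and $w$ are not vertex-resilient yet start in a common block; by Lemma~\ref{lemma:vertex-resilient-auxiliary} they are already not vertex-resilient in $H=G_r$. If they are not siblings or parent--child in $T$, the split separates them. Otherwise they reach a common $\hat{C}_T(q)$, and the heart of the proof is to show the final refinement detects the separation. Here I would prove the key structural fact that, once $u,w$ are children of $r$ in $D(s)$ and children of $q$ in $T$, the only vertices whose removal can leave $u$ and $w$ in different strongly connected components are the $T$-ancestors of $u$ and $w$, from $q$ up to the root $r$: for every other vertex $x$, concatenating a $w \rightsquigarrow r$ path (available since $x$ is not a reverse-dominator of $w$) with an $r \rightsquigarrow u$ path (available since $r$ is the only proper dominator of $u$ in $H(r)$) yields a $w \rightsquigarrow u$ path avoiding $x$, and symmetrically in the other direction. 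Since the construction of $H^R_q$ contracts exactly these $T$-ancestors into $q$, removing $q$ in $H^R_q$ tests resilience against this entire critical set at once; thus if $u$ and $w$ are not vertex-resilient they fall into different strongly connected components of $H^R_q \setminus q$ and are separated. This combination of forward dominators (through $G_r$, centered at $r$) with reverse dominators (through $H^R_q$, centered at $q$), which generalizes the single-center warm-up characterization of Section~\ref{section:VRB-linear} to two distinct centers via two nested levels of auxiliary graphs, is the step I expect to be the main obstacle.

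Finally I would dispose of the degenerate cases in which a block contains the center $q$ of a refinement together with some of its $T$-children. For a pair $(q,w)$, removal of $q$ is \emph{not} part of the vertex-resilience condition, which is precisely why Step~3.5.1 restricts the refinement to blocks with at least two ordinary vertices of $H^R_q$ and why $\mathit{refine}(\cdot,\cdot,q)$ reinserts $q$ into each resulting block: the genuine children of $q$ are partitioned by the strongly connected components of $H^R_q\setminus q$, detecting non-resilience among them, while $q$ is rejoined to every part so that it is never spuriously split off from a child to which it is resilient. Verifying that these conventions keep the bookkeeping consistent with the forest invariant (distinct blocks sharing at most two vertices during the algorithm, and at most one at the end) completes the argument.
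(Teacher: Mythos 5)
Your overall architecture (soundness of every $\mathit{split}$/$\mathit{refine}$ step on vertex-resilient pairs, plus completeness for non-resilient pairs) matches the paper's proof, and for the configuration you actually treat, your mechanism is a valid alternative to the paper's: where the paper fixes a direction of separation and exhibits a \emph{single} separating vertex, namely $q=d_H^R(u)$, proving by a two-step contradiction argument that it is a strong articulation point on all $u\to v$ paths of $H$, you instead show that every possible separator lies in the set of common $T$-ancestors that the construction of $H^R_q$ contracts into $q$, so that one strongly-connected-components computation in $H^R_q\setminus q$ tests the whole critical set at once. That is a clean and correct idea for the sibling--sibling case.

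The genuine gap is that your completeness argument is proved \emph{only} for the configuration in which $u,w$ are both children of $r$ in $D(s)$ and both children of $q$ in $T=D_H^R(r)$. Pairs that survive Step 1 and the splits can also be in a parent--child relation in either tree, and these are exactly the cases where your ``key structural fact'' breaks as stated. If $d(u)=w$ in $D(s)$ (the paper's case (a)), then $r$ is \emph{not} the only proper dominator of $u$ in $H(r)$ --- $w$ also dominates $u$ --- so the $r\rightsquigarrow u$ leg of your concatenation needs a separate argument; if instead the pair is parent--child in $T$, say $d_T(w)=u$, then the common block survives the split inside $\hat{C}_T(u)$ rather than $\hat{C}_T(q)$, the pair is tested at the refine centered at $d_T(u)$, the critical set is no longer symmetric between $u$ and $w$, and one must additionally rule out degenerate coincidences such as one member of the pair being equal to the relevant (reverse) immediate dominator, in which case ``$q$ separates the pair'' is vacuous and the pair would not even be a pair of ordinary vertices of $H^R_q$. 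The paper disposes of all of this through directional bookkeeping that your proposal never sets up: it fixes the blocked direction (all $u\to v$ paths contain a common strong articulation point), proves from this that $d(v)\neq u$, and then chooses $q=d_H^R(u)$ for the \emph{source} $u$ of the blocked direction, a choice that works uniformly whatever the $T$-relation of the pair is. Finally, your closing paragraph about pairs $(q,w)$ only explains why the refine centered at $q$ never splits such a pair (a property built into $\mathit{refine}(\cdot,\cdot,q)$); it does not show where a \emph{non}-resilient parent--child pair actually gets separated, which under your scheme must happen at the refine centered one level higher and therefore needs precisely the missing cases.
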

\begin{proof}
Let $u$ and $v$ be any vertices. If $u$ and $v$ are vertex-resilient in $G$, then by Lemma \ref{lemma:vertex-resilient-auxiliary} they are vertex-resilient in both auxiliary graphs of $G$ and $G_r$ that contain them as ordinary vertices. This
implies that the algorithm will correctly include them in the same block in Step 1 and will not separate them in Steps 3.3 and 3.5. So suppose that $u$ and $v$ are not vertex-resilient.
Then, without loss of generality, we can assume that all  paths from $u$ to $v$ contain a common strong articulation point. Thus, $d(v) \not= u$. We argue that all the blocks that contain $u$ and all the blocks that contain $v$ will be separated in some step of the algorithm.

First we observe that $u$ and $v$ can appear together in at most one of the blocks constructed in Step 1. Also, $u$ and $v$ can remain in at most one block after each $\mathit{split}$ operation ($u$ and $v$ can have at most one identical label $\ell_i(u) = \ell_j(v)$). So suppose that $u$ and $v$ are still contained in one common block just before the execution of Step 3.5. We will show that $u$ and $v$ will be separated after the $\mathit{refine}$ operation executed in Step 3.5.3. Since $u$ and $v$ were not separated by a $\mathit{split}$ operation, they are either siblings or one is the parent of the other in $D^R_H(r)$. Also, since $d(v) \not= u$ we have the following cases.

(a) $d(u) = v$. Then $u$ and $v$ are both ordinary vertices of the auxiliary graph $H=G_r$ with $r = d(v)$. Lemma \ref{lemma:vertex-resilient-auxiliary} implies that $G_r$ contains a strong articulation point $x$ that separates $u$ from $v$. We argue that $x$ is a proper ancestor of $u$ in $D^R_H(r)$. If not, then $H^R$ contains a path $P^R$ from $u$ to $r$ that avoids $x$. Since $d(v)=r$, $H$ contains a path $Q$ from $r$ to $v$ that avoids $x$. Thus $P \cdot Q$ is a path in $H$ from $u$ to $v$ that avoids $x$, a contradiction. Now we claim that $q = d^R_H(u)$ is also a strong articulation point that separates $u$ from $v$. Suppose the claim is false. Then $x \not= q$, so $x$ is a proper ancestor of $q$ in $D^R_H(r)$. Let $P$ be a path from $u$ to $v$ that avoids $q$. Then $x$ is on $P$ since $x$ separates $u$ from $v$. Let $P_x$ be the part of $P$ from $u$ to $x$. Also, since $x$ is a proper ancestor of $q$ in $D^R_H(r)$, $H^R$ has a path $Q^R$ from $r$ to $x$ that avoids $q$.  Then $P \cdot Q$ is a path in $H$ from $u$ to $r$ that avoids $q$, a contradiction.
The claim implies that $u$ and $v$ are located in different strongly connected components of $H^R_q\setminus q$, so they are contained in different blocks computed in Step 3.5.3.

(b) $d(v) = d(u) = r$. Then $u$ and $v$ are both ordinary vertices of the auxiliary graph $H=G_r$. Lemma \ref{lemma:vertex-resilient-auxiliary} implies that $G_r$ contains a strong articulation point $x$ that separates $u$ from $v$. By the same arguments as in case (a), it follows that $q = d^R_H(u)$ is a strong articulation point that separates $u$ from $v$.  So again $u$ and $v$ will be located in different blocks after Step 3.5.3.
\end{proof}

\ignore{
In order to analyze the running time of Algorithm \textsf{FastVRB} we need to specify how to compute the set of blocks $\mathcal{B}_q$ in Step 3.4.1. To facilitate this search we explicitly store the adjacency lists of the current block forest $F$. Recall that $F$ is bipartite, so the adjacency list of a vertex $v$ stores the blocks that contain $v$, and the adjacency list of a block node $B$ stores the vertices in $B$. Initially $F$ contains one block for each set $\hat{C}(v)$, for all $v \in D(s)$. These blocks are later refined by executing the $\mathit{split}$ and $\mathit{refine}$ operations, which maintain the invariant that $F$ is a forest, and that any two distinct blocks have at most two vertices in common. Also, since during the execution of the algorithm the number of blocks can only increase, $F$ contains at most $n-1$ blocks at any given time. This fact implies that Lemma \ref{lemma:blocks-number} holds, so the total number of vertices and edges in $F$ is $O(n)$.
}

\begin{lemma}
\label{lemma:FastVRB-time}
Algorithm \textsf{FastVRB} runs in $O(m)$ time.
\end{lemma}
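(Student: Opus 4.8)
The plan is to bound the time of each step of Algorithm \textsf{FastVRB} and show that the total is $O(m)$. The key structural facts I would rely on are: (i) dominator trees of a flow graph with $k$ edges can be computed in linear time (\cite{dominators:bgkrtw} and others cited in Section~\ref{sec:dominators}); (ii) by Lemma~\ref{lemma:auxiliary-graphs-size} the first-level auxiliary graphs $G_r$ have $O(n)$ vertices and $O(m+n)=O(m)$ edges in total, and by Lemma~\ref{lemma:auxiliary-graphs-construction} they can all be constructed in $O(m)$ time; and (iii) the block forest $F$ has $O(n)$ vertices and edges throughout, since the number of blocks never exceeds $n-1$.

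First I would handle Steps~1 and~2: computing $D(s)$ is $O(m)$, initializing the $\hat{C}(v)$ blocks touches each vertex and tree-edge once for $O(n)$ total, and Step~2 is $O(m)$ by Lemma~\ref{lemma:auxiliary-graphs-construction}. The crux is the double loop of Step~3 together with the nested Step~3.5. Let $H=G_r$ have $n_r$ vertices and $m_r$ edges. Step~3.1 (computing $T=D_H^R(r)$) costs $O(m_r)$. For Step~3.4, I must apply the auxiliary-graph construction \emph{recursively} inside $H$ using the tree $T$; the analogues of Lemmas~\ref{lemma:auxiliary-graphs-size} and~\ref{lemma:auxiliary-graphs-construction} give that the second-level graphs $H_q^R$ have $O(n_r)$ vertices and $O(m_r+n_r)=O(m_r)$ edges in total and are built in $O(m_r)$ time. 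Summing over all $r$, I would invoke the fact that a vertex of $G$ appears in at most four auxiliary graphs $G_r$, so $\sum_r n_r = O(n)$ and $\sum_r m_r = O(m)$; hence the work in Steps~3.1, 3.4, 3.5.2 (strongly connected components, linear by \cite{dfs:t}) sums to $O(m)$.

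The part I expect to require the most care is the bookkeeping for the $\mathit{split}$ and $\mathit{refine}$ operations (Steps~3.3, 3.5.1, 3.5.3), because their running time is charged not against the auxiliary-graph sizes but against the sizes of the blocks being manipulated, and against the cost of \emph{locating} the relevant blocks in $F$. By Lemmas~\ref{lemma:refine} and~\ref{lemma:split} a single $\mathit{split}(B,T)$ costs $O(|V(T)|)=O(n_r)$ and a single $\mathit{refine}(\mathcal{B}_q,\mathcal{S},q)$ costs $O(N+K)$ where $N$ is the total size of the blocks in $\mathcal{B}_q$ and $K=O(n_r)$. Here I would use the invariant (stated in the paragraph preceding the lemma) that $F$ is a forest on $O(n)$ nodes in which any two blocks share at most two vertices, together with the explicit adjacency-list representation of $F$. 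The subtle point is that each block must be processed against the correct tree, and a block can be split only a bounded number of times before stabilizing; since each $\mathit{split}$ or $\mathit{refine}$ either leaves a block intact or increases the block count (which is capped at $n-1$), the number of block fragments created over the whole run is $O(n)$, and the total size handled by all $\mathit{split}$/$\mathit{refine}$ operations telescopes to $O(n)$ across the first level and $O(n_r)$ within each $H$, again summing to $O(n)$. I would argue that locating the blocks containing vertices of $C(r)$ (Step~3.2) and the blocks with two ordinary vertices of $H_q^R$ (Step~3.5.1) is done by scanning the adjacency lists of the $O(n_r)$ relevant vertices, so this too is absorbed into the $\sum_r n_r = O(n)$ bound. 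Combining all contributions gives an overall $O(m)$ running time.
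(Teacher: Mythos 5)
Your overall accounting follows the same route as the paper's proof: Steps 1--2 via linear-time dominators and Lemma~\ref{lemma:auxiliary-graphs-construction}, the bounds $\sum_r n_r = O(n)$ and $\sum_r m_r = O(m)$ via Lemma~\ref{lemma:auxiliary-graphs-size} (applied again to the second-level graphs $H^R_q$), strongly connected components via \cite{dfs:t}, the $\mathit{refine}$ cost via Lemmas~\ref{lemma:refine} and~\ref{lemma:blocks-size}, and the block-location cost via the adjacency lists of $F$.

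There is, however, a genuine gap in your treatment of the $\mathit{split}$ operations, and it is exactly the point where the paper's proof has to work. You state that a single $\mathit{split}(B,T)$ costs $O(|V(T)|)=O(n_r)$ (which is what Lemma~\ref{lemma:split} literally gives), and then assert that the total over all splits ``telescopes'' to $O(n)$; these two statements are incompatible. Step~3.3 performs one $\mathit{split}$ per block of $\mathcal{B}$, and $|\mathcal{B}|$ can be $\Theta(n_r)$: since vertices are processed bottom-up, each child $v\in C(r)$ may already lie in many block fragments when $r$ is reached. Concretely, if in $D(s)$ the vertex $r$ has $k=\Theta(n)$ children $v_1,\dots,v_k$ and each $v_i$ has a single child $w_i$, then when $r$ is processed $\mathcal{B}$ contains the block $\hat{C}(r)$ plus the $\Theta(n)$ blocks $\{v_i,w_i\}$, while $n_r=\Theta(n)$; charging $O(n_r)$ per split gives $\Theta(n^2)$, not $O(n)$. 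Your remark that a split either leaves a block intact or increases the block count does not repair this, because under your per-operation bound a split that leaves its block intact still pays the full $O(n_r)$. The missing idea is the paper's amortization: the $O(|V(T)|)$ term in Lemma~\ref{lemma:split} is preprocessing of $T$ (preorder numbering and parent labels) that does not depend on $B$, so it is performed \emph{once} per auxiliary graph --- $O(n_r)$ per graph, hence $O(n)$ overall by Lemma~\ref{lemma:auxiliary-graphs-size} --- after which each individual $\mathit{split}(B,T)$ runs in time proportional to $|B|$, and the sum of $|B|$ over all splits is $O(n)$ by Lemma~\ref{lemma:blocks-size}. With this amortization inserted, your argument goes through and coincides with the paper's proof.
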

\begin{proof}
We account for the total time spent on each step that Algorithm \textsf{FastVRB} executes.
Step 1 takes $O(m)$ time by \cite{dominators:bgkrtw}, and Step 2 takes $O(m)$ time by Lemma \ref{lemma:auxiliary-graphs-construction}. From Lemma \ref{lemma:auxiliary-graphs-size} we have that the total number of vertices and the total number of edges in all auxiliary graphs $H$ of $G$ are $O(n)$ and $O(m)$ respectively. Then, again by Lemma \ref{lemma:auxiliary-graphs-size}, the total size (number of vertices and edges) of all auxiliary graphs $H_q^R$ for all $H$, computed in Step 3.4, is still $O(m)$ and they are also computed in $O(m)$ total time by Lemma \ref{lemma:auxiliary-graphs-construction}.
Now consider the $\mathit{split}$ operations. All these operations that occur during Step 3.3 for a specific auxiliary graph $G_r$ operate on the same tree $T$, which can be preprocessed once, as in Lemma \ref{lemma:split}, for all $\mathit{split}$ operations. Therefore, the total preprocessing time for all  $\mathit{split}$ operations is $O(n)$. Excluding the preprocessing time for $T$, a   $\mathit{split}(B,T)$ operation takes time proportional to the number of vertices in $B$. Therefore all  $\mathit{split}$ operations take $O(n)$ time in total by Lemmas \ref{lemma:blocks-size} and \ref{lemma:split}.
In Step 3.5.1 we examine the adjacency lists of the ordinary vertices $v \in H^R_q$ and find the corresponding blocks that contain at least such two ordinary vertices. Then we examine the adjacency lists of each such block. So, the adjacency lists of each vertex $v$ and each block that contains $v$ can be examined at most three times. Hence, Step 3.5.1 takes $O(n)$ time in total.
Finally, Steps 3.5.2 and 3.5.3 take $O(m)$ time in total by \cite{dfs:t} and Lemmas \ref{lemma:blocks-size} and \ref{lemma:refine}.
\end{proof}

\subsection{Queries}
\label{sec:queries}

Algorithm \textsf{FastVRB} computes the vertex-resilient blocks of the input digraph $G$ and stores them in the block forest $F$ of Section \ref{sec:blocks}, which makes it straightforward to test in constant time if two query vertices $v$ and $w$ are vertex-resilient. Here we show that if $v$ and $w$ are not vertex-resilient, then we can report a witness of this fact, that is, a strong articulation point $x$ such that $v$ and $w$ are not in the same strongly connected component of $G \setminus x$. Using this witness, it is straightforward to verify in $O(m)$ time that $v$ and $w$ are not vertex-resilient; it suffices to check that $v$ is not reachable from $w$ in $G \setminus x$ or vice versa.

To obtain this witness, we would like to apply Lemma \ref{lemma:VB-graph-path-2}, but this requires $v$ and $w$ to be in the same tree of the block forest. Fortunately, we can find the witness fast by applying Lemmas \ref{lemma:vertex-resilient-necessary} and \ref{lemma:blocks-paths}, which use information computed during the execution of \textsf{FastVRB}. We do that as follows.
First consider the simpler case where $v=s$. If Lemma \ref{lemma:vertex-resilient-necessary} does not hold for $s$ and $w$ in $D(s)$ then $d(w) \not= s$ is a strong articulation point that separates $s$ from $w$. Otherwise, $s=d(w)$, and $s$ and $w$ are both ordinary vertices in the auxiliary graph $H=G_s$. Then $s$ and $w$ cannot satisfy Lemma \ref{lemma:vertex-resilient-necessary} in $D^R_H(s)$, so $d_H^R(w)$ is a strong articulation point that separates $w$ from $s$. Now consider the case where $v,w \in V \setminus s$.
Suppose first that $v$ and $w$ do not satisfy Lemma \ref{lemma:vertex-resilient-necessary} in $D(s)$. Then $d(w)$ is not an ancestor of $v$ or $d(v)$  is not an ancestor of $w$ (or both). Assume, without loss of generality, that $d(w)$  is not an ancestor of $v$. By Lemma \ref{lemma:blocks-paths}, all paths from $v$ to $w$ pass through $d(w)$, so $d(w)$ is a strong articulation point that separates $v$ from $w$. On the other hand, if Lemma \ref{lemma:vertex-resilient-necessary} holds for $v$ and $w$ in $D(s)$, then $v$ and $w$ are both ordinary vertices in an auxiliary graph $H=G_r$, where $r=d(v)$ if $v=d(w)$, $r=d(w)$ if $w=d(v)$, and $r=d(v)=d(w)$ otherwise. By Lemma \ref{lemma:vertex-resilient-auxiliary}, $v$ and $w$ are not vertex-resilient in $H$. If they violate Lemma \ref{lemma:vertex-resilient-necessary} for $D_H^R(r)$ then we can find a strong articulation point that separates them as above. Finally, assume that Lemma \ref{lemma:vertex-resilient-necessary} holds for $v$ and $w$ in $D_H^R(r)$. Now $v$ and $w$ are both ordinary vertices in an auxiliary graph $H^R_q$. From the proof of Lemma \ref{lemma:FastVRB-correctness} we have that $q=d_H^R(v)$ or $q=d_H^R(w)$ and that $q$ is a strong articulation point that separates $v$ and $w$.

All the above tests can be performed in constant time. It suffices to store the dominator tree $D(s)$ of $G(s)$, and the dominator trees $D_H^R(r)$ of all auxiliary graphs $H^R=G_r^R$. The space required for these data structures is $O(n)$ by Lemma \ref{lemma:auxiliary-graphs-size}.

\ignore{
\begin{theorem}
\label{theorem:vrb}
Let $G$ be a digraph with $n$ vertices and $m$ edges. We can compute the vertex-resilient blocks of $G$ in $O(m+n)$ time. Given the vertex-resilient blocks of $G$, we can test in $O(1)$ time if any two vertices are vertex-resilient. Moreover, if the two vertices are not vertex-resilient, then we can report in $O(1)$ time a strong articulation point that separates them.
\end{theorem}
}

\begin{theorem}
\label{theorem:vrb}
Let $G$ be a digraph with $n$ vertices and $m$ edges. We can compute the vertex-resilient blocks of $G$ in $O(m+n)$ time and store them in a data structure of $O(n)$ space. Given this data structure, we can test in $O(1)$ time if any two vertices are vertex-resilient. Moreover, if the two vertices are not vertex-resilient, then we can report in $O(1)$ time a strong articulation point that separates them.
\end{theorem}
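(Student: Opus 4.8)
The plan is to assemble the results of Sections \ref{section:vertex-resilient-blocks} and \ref{sec:queries} into the single statement, treating separately the computation bound, the space bound, the resilience test, and the witness query.

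First I would dispose of the reduction to the strongly connected case. If $G$ is not strongly connected, I compute its strongly connected components in $O(m+n)$ time by depth first search; since any two vertex-resilient vertices lie in the same component (as observed at the start of Section \ref{section:vertex-resilient-blocks}), it suffices to run the remaining work on each component independently. On a strongly connected input we have $m \ge n$, so Algorithm \textsf{FastVRB} computes the vertex-resilient blocks correctly and in $O(m)$ time by Lemmas \ref{lemma:FastVRB-correctness} and \ref{lemma:FastVRB-time}; summing over all components yields the claimed $O(m+n)$ bound in general.

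For the space bound and the resilience test I would appeal to the block forest $F$ of Section \ref{sec:blocks}. By Lemmas \ref{lemma:VB-graph}, \ref{lemma:blocks-number} and \ref{lemma:blocks-size}, $F$ is a forest with $O(n)$ nodes and $O(n)$ edges, so storing it as a rooted forest with one parent pointer per node costs $O(n)$ space. Rooting each tree of $F$ arbitrarily, the characterization recorded after Lemma \ref{lemma:VB-graph-path-2} states that $u \leftrightarrow_{\mathrm{vr}} w$ if and only if $u$ and $w$ are siblings or one is the grandparent of the other in $F$; both conditions are decided by following at most two parent pointers, hence in $O(1)$ time.

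The witness query is the part that needs the most care, since $F$ by itself cannot exhibit a separating strong articulation point: two vertices that are strongly connected but not vertex-resilient need not even lie in the same tree of $F$. To handle this I would additionally retain the dominator tree $D(s)$ together with the dominator trees $D_H^R(r)$ of all the auxiliary graphs produced by \textsf{FastVRB}, augmented with preorder numbers and subtree sizes so that ancestor-descendant relations are tested in $O(1)$ time; by Lemma \ref{lemma:auxiliary-graphs-size} these trees have total size $O(n)$, preserving the space bound. Given $v$ and $w$ reported as not vertex-resilient, I would run the constant-time case analysis of Section \ref{sec:queries}: it first checks whether Lemma \ref{lemma:vertex-resilient-necessary} fails in $D(s)$ and, if so, returns the appropriate immediate dominator as the separator via Lemma \ref{lemma:blocks-paths}; otherwise it descends into the unique auxiliary graph $G_r$ containing $v$ and $w$ as ordinary vertices, repeats the test in $D_H^R(r)$, and in the remaining case returns $d_H^R(v)$ or $d_H^R(w)$ as justified in the proof of Lemma \ref{lemma:FastVRB-correctness}. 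Each step is a constant number of parent and ancestor queries, giving the $O(1)$ witness bound and completing the theorem.
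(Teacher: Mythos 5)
Your proposal is correct and follows essentially the same route as the paper: the theorem is exactly the assembly of the strongly-connected reduction from the start of Section~\ref{section:vertex-resilient-blocks}, Lemmas~\ref{lemma:FastVRB-correctness} and~\ref{lemma:FastVRB-time} for the $O(m+n)$ computation, the rooted block forest of Section~\ref{sec:blocks} for the $O(n)$-space constant-time resilience test, and the case analysis of Section~\ref{sec:queries} over $D(s)$ and the trees $D_H^R(r)$ (with space bounded by Lemma~\ref{lemma:auxiliary-graphs-size}) for the constant-time witness. Your explicit mention of preorder numbers and subtree sizes for $O(1)$ ancestor tests is the same mechanism the paper sets up in Section~\ref{section:auxiliary}, so nothing is missing.
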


\section{Computing the $2$-vertex-connected blocks}
\label{section:2vc-blocks}

We can compute the $2$-vertex-connected blocks of the input digraph $G=(V,E)$ by applying Corollary \ref{cor:2vc-resilient} as follows.
Given the vertex-resilient blocks $\mathcal{B}$ and the $2$-edge-connected blocks $\mathcal{S}$ of $G$, we simply execute $\mathit{refine}(\mathcal{B}, \mathcal{S})$. This takes $O(n)$ time by Lemma \ref{lemma:refine}.
Also, since the $2$-vertex-connected blocks have a block forest representation, we can test if two given vertices are $2$-vertex-connected in $O(1)$ time as described in Section \ref{sec:blocks}.

If we only wish to answer queries of whether two vertices $v$ and $w$ are  $2$-vertex-connected, without computing explicitly the $2$-vertex and the $2$-edge-connected blocks, then we can use a simpler alternative, as suggested by Lemma \ref{lemma:2vc-resilient-sb}. First, we test if $v$ and $w$ are vertex-resilient in $O(1)$-time as in Section \ref{sec:queries}, and if they are not, then we can report a strong articulation point that separates them. If, on the other hand, $v$ and $w$ are vertex-resilient then we need to check if $G$ contains $(v,w)$ or $(w,v)$ as a strong bridge. We can do this easily using the same information as in Section \ref{sec:queries}, namely  the dominator tree $D(s)$ of $G(s)$, and the dominator trees $D_H^R(r)$ of all auxiliary graphs $H^R=G_r^R$. For instance, if $(v,w)$ is a strong bridge in $G$, then it will appear as an edge in one of the dominator trees. Therefore, it suffices to mark the edges of dominator trees that are strong bridges, and then check if $v$ is the parent of $w$ or $w$ is the parent of $v$ in $D(s)$ or in $D^R_H(r)$, where $H=G_r$ is the auxiliary graph of $G$ such that $r=d(v)$ if $v=d(w)$, $r=d(w)$ if $w=d(v)$, and $r=d(v)=d(w)$ otherwise.

\begin{theorem}
\label{theorem:2vcb}
Let $G$ be a digraph with $n$ vertices and $m$ edges. We can compute the $2$-vertex-connected blocks of $G$ in $O(m+n)$ time and store them in a data structure of $O(n)$ space. Given this data structure, we can test in $O(1)$ time if any two vertices are $2$-vertex-connected. Moreover, if the two vertices are not $2$-vertex-connected, then we can report in $O(1)$ time a strong articulation point or a strong bridge that separates them.
\end{theorem}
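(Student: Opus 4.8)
The plan is to build the $2$-vertex-connected blocks directly from the vertex-resilient blocks and the $2$-edge-connected blocks, exploiting Corollary~\ref{cor:2vc-resilient}, which states that $v \leftrightarrow_{\mathrm{2v}} w$ if and only if $v \leftrightarrow_{\mathrm{vr}} w$ and $v \leftrightarrow_{\mathrm{2e}} w$. First I would compute the vertex-resilient blocks $\mathcal{B}$ in $O(m+n)$ time using Theorem~\ref{theorem:vrb}, and the $2$-edge-connected blocks $\mathcal{S}$ in $O(m+n)$ time by the algorithm of~\cite{2ECB}. Since $\mathcal{S}$ partitions $V$, each $2$-vertex-connected block is an intersection of a vertex-resilient block with a $2$-edge-connected block, and these intersections are produced by a single call $\mathit{refine}(\mathcal{B}, \mathcal{S})$. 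By Lemma~\ref{lemma:blocks-size} the total size of all vertex-resilient blocks is $O(n)$, so this call runs in $O(n)$ time by Lemma~\ref{lemma:refine}. Hence the entire computation takes $O(m+n)$ time.

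Next I would establish the space and query bounds. As observed at the end of Section~\ref{sec:blocks}, the $2$-vertex-connected blocks inherit the structural properties of the vertex-resilient blocks: any two distinct blocks share at most one vertex, their block graph is a forest, and there are at most $n-1$ of them. Therefore the resulting block forest occupies $O(n)$ space, and the same constant-time test used in Section~\ref{sec:blocks}---checking whether the two query vertices are siblings or stand in a grandparent relation in the rooted block forest---decides $2$-vertex-connectivity in $O(1)$ time.

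It remains to produce a witness in $O(1)$ time when $v$ and $w$ are not $2$-vertex-connected, following the dichotomy of Lemma~\ref{lemma:2vc-resilient-sb}. If $v$ and $w$ are not vertex-resilient, Theorem~\ref{theorem:vrb} already reports a separating strong articulation point in $O(1)$ time. Otherwise $v \leftrightarrow_{\mathrm{vr}} w$, and Lemma~\ref{lemma:2vc-resilient-sb} guarantees that at least one of $(v,w)$, $(w,v)$ is a strong bridge of $G$, which is the witness to return. To locate it in constant time I would precompute and store the dominator tree $D(s)$ of $G(s)$ together with the reverse dominator trees $D_H^R(r)$ of all auxiliary graphs $H = G_r$---these occupy $O(n)$ space by Lemma~\ref{lemma:auxiliary-graphs-size}---and mark every tree edge that is a strong bridge. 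Since $v \leftrightarrow_{\mathrm{vr}} w$, the two vertices occur together as ordinary vertices of a single auxiliary graph $H = G_r$, where $r$ is read off from their relationship in $D(s)$ ($r = d(v)$ if $v = d(w)$, $r = d(w)$ if $w = d(v)$, and $r = d(v) = d(w)$ otherwise), so the relevant trees $D(s)$ and $D_H^R(r)$ are identified in $O(1)$ time. I would then test whether $v$ is the parent of $w$, or $w$ the parent of $v$, in either $D(s)$ or $D_H^R(r)$, and report the corresponding marked edge.

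The hard part is justifying this last step: that whenever $(v,w)$ (or $(w,v)$) is a strong bridge of $G$, it must surface as a marked parent--child edge of $D(s)$ or of $D_H^R(r)$. The argument rests on the characterization of~\cite{Italiano2012}, by which every strong bridge is a bridge of $G(s)$ or of $G^R(s)$, together with the path correspondence of Lemma~\ref{lemma:blocks-auxiliary-paths}, which preserves separators between ordinary vertices when passing to the auxiliary graph. A bridge of a flow graph is always an edge of its dominator tree, since its tail is the immediate dominator of its head; thus a strong bridge that is a bridge of $G(s)$ shows up directly as the edge $v = d(w)$ in $D(s)$, while one that is a bridge of $G^R(s)$ is carried over to a bridge of $H^R(r)$ and hence appears as the corresponding edge of $D_H^R(r)$. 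Checking that this case analysis is exhaustive---and in particular that the one auxiliary graph singled out by $r$ always captures the reverse bridge---is where the care is required; once it is in place, all tests reduce to constant-time parent lookups, yielding the claimed bounds.
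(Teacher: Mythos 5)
Your proposal matches the paper's proof essentially step for step: the same single $\mathit{refine}(\mathcal{B},\mathcal{S})$ of the vertex-resilient blocks by the $2$-edge-connected blocks via Corollary~\ref{cor:2vc-resilient}, the same block-forest representation giving $O(n)$ space and $O(1)$ queries, and the same witness scheme via Lemma~\ref{lemma:2vc-resilient-sb}, using strong-bridge-marked parent--child edges of $D(s)$ and of the trees $D_H^R(r)$ with the identical rule for selecting $r$. The one step you flag as still requiring care---that every strong bridge $(v,w)$ or $(w,v)$ between vertex-resilient $v,w$ must surface as a marked tree edge of $D(s)$ or $D_H^R(r)$---is asserted by the paper at the same level of detail (``it will appear as an edge in one of the dominator trees''), so your write-up is, if anything, slightly more explicit than the paper's own argument.
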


\ignore{
\begin{theorem}
\label{theorem:2vcb}
Let $G$ be a digraph with $n$ vertices and $m$ edges. We can compute the $2$-vertex-connected blocks of $G$ in $O(m+n)$ time. Also, given the  $2$-vertex-connected blocks of $G$, we can test in $O(1)$ time if any two vertices are $2$-vertex-connected.
\end{theorem}
}

\section{Sparse certificate for the vertex-resilient blocks and the $2$-vertex-connected blocks}
\label{section:sparse-certificate}

Here we show how to extend Algorithm \textsf{FastVRB} so that it also computes in linear time a sparse certificate for the vertex-resilient and the $2$-vertex-connected relations. That is, we compute a subgraph $C(G)$ of the input graph $G$ that has $O(n)$ edges
and maintains the same vertex-resilient and $2$-vertex-connected blocks as the input graph.
We can achieve this by applying the same approach we used in \cite{2ECB} for computing  a sparse certificate for the $2$-edge-connected blocks.

As in Section \ref{section:vertex-resilient-blocks} we can assume without loss of generality that $G$ is strongly connected, in which case subgraph $C(G)$ will also be strongly connected.
The certificate uses the concept of \emph{independent spanning trees} \cite{domcert}. A spanning tree $T$ of a flow graph $G(s)$ is a tree with root $s$ that contains a path from $s$ to $v$ for all vertices $v$.  Two spanning trees $B$ and $R$ rooted at $s$ are \emph{independent} if for all $v$, the paths from $s$ to $v$ in $B$ and $R$ share only the dominators of $v$. Every flow graph $G(s)$ has two such spanning trees, computable in linear time \cite{domcert}. Moreover, the computed spanning trees are \emph{maximally edge-disjoint}, meaning that the only edges they have in common are the bridges of $G(s)$.

During the execution of Algorithm \textsf{FastVRB}, we maintain a list (multiset) $L$ of the edges to be added in $C(G)$. The same edge may be inserted into $L$ multiple times, but the total number of insertions will be $O(n)$. Then we can use radix sort to remove duplicate edges in $O(n)$ time.
We initialize $L$ to be empty.
During Step 1 of Algorithm \textsf{FastVRB} we compute two independent spanning trees, $B(G(s))$ and $R(G(s))$ of $G(s)$ and insert their edges into $L$.
Next, in Step 3.1 we compute two independent spanning trees $B(H^R(r))$ and $R(H^R(r))$ for each auxiliary graph $H^R(r)$.
For each edge $(u,v)$ of these spanning trees, we insert a corresponding edge into $L$ as follows. If both $u$ and $v$ are ordinary vertices in $H^R(r)$, we insert $(u,v)$ into $L$ since it is an original edge of $G$.
Otherwise, $u$ or $v$ is an auxiliary vertex and we insert into $L$ a corresponding original edge of $G$. Such an original edge can be easily found during the construction of the auxiliary graphs.
Finally, in Step 3.5, we compute two spanning trees for every connected component $S_i$ of each auxiliary graph $H^R_q\setminus q$ as follows. Let $H_{S_i}$ be the subgraph of $H_q$ that is induced by the vertices in $S_i$. We choose an arbitrary vertex $v \in S_i$ and compute a spanning tree of $H_{S_i}(v)$ and a spanning tree of $H^R_{S_i}(v)$. We insert in $L$ the original edges that correspond to the edges of these spanning trees.

\begin{lemma}
The sparse certificate $C(G)$ has the same vertex-resilient blocks and $2$-vertex-connected blocks as the input digraph $G$.
\end{lemma}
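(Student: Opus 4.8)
The plan is to prove that the relations $\leftrightarrow_{\mathrm{vr}}$ and $\leftrightarrow_{\mathrm{2v}}$ are identical in $G$ and in $C(G)$, since equality of these relations immediately gives equality of the blocks and of their forest representations. One inclusion is free: as $C(G)$ is a subgraph of $G$, we have $C(G)\setminus x \subseteq G\setminus x$ for every vertex $x$, so if $v$ and $w$ lie in the same strongly connected component of $C(G)\setminus x$ they also do in $G\setminus x$. Hence $v \leftrightarrow_{\mathrm{vr}} w$ in $C(G)$ implies $v \leftrightarrow_{\mathrm{vr}} w$ in $G$, and the same subgraph argument yields the corresponding implication for $\leftrightarrow_{\mathrm{2v}}$. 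It therefore remains to establish the reverse inclusion, namely that the edges inserted into $L$ suffice to certify the vertex-resilience (resp.\ $2$-vertex-connectivity) of every pair that enjoys it in $G$.

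My strategy for the reverse inclusion is to argue that Algorithm \textsf{FastVRB}, run on $C(G)$, performs the same computation and outputs the same block forest as when run on $G$; by Lemma~\ref{lemma:FastVRB-correctness} this suffices. I would verify this in three steps mirroring the three places where edges are added to $L$. First, the trees $B(G(s))$ and $R(G(s))$ inserted in Step~1 are independent, so for every vertex $v$ the two $s$-to-$v$ tree paths meet exactly at the dominators of $v$ in $G(s)$~\cite{domcert}; consequently the subgraph they induce has the same dominator tree as $G(s)$, and since $C(G)\subseteq G$ this forces $D_{C(G)}(s)=D(s)$. Thus the initial blocks $\hat{C}(v)$ and the vertex partitions $C^k(r)$ underlying the auxiliary graphs agree for $C(G)$ and $G$. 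Second, the two independent spanning trees of each $H^R(r)$, once their edges are translated back to original edges of $G$ and placed in $C(G)$, certify the reverse dominator tree $D_H^R(r)$ by the same independence property; here one invokes the path-lifting correspondence of Lemma~\ref{lemma:blocks-auxiliary-paths} to see that the translated edges recreate the appropriate shortcut edges, so the auxiliary graph built from $C(G)$ has the same reverse dominator tree used in Steps~3.1--3.3. Third, for each strongly connected component $S_i$ of $H_q^R\setminus q$ we inserted a spanning tree of $H_{S_i}(v)$ and one of $H_{S_i}^R(v)$, which keep each $S_i$ strongly connected in $C(G)$, while the subgraph relation prevents distinct components from merging; hence the components computed in Step~3.5.2 are unchanged. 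With the dominator trees, auxiliary graphs, and SCC partitions all preserved, the $\mathit{split}$ and $\mathit{refine}$ operations act identically in the two runs, so the vertex-resilient blocks of $C(G)$ coincide with those of $G$.

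For the $2$-vertex-connected blocks I would combine Corollary~\ref{cor:2vc-resilient} with Lemma~\ref{lemma:2vc-resilient-sb}: among vertex-resilient pairs, which we have just shown to coincide in $G$ and $C(G)$, two vertices fail to be $2$-vertex-connected exactly when one of the edges $(v,w),(w,v)$ is a strong bridge. Strong bridges are precisely the bridge edges of the dominator trees computed by the algorithm, and such edges are shared by the two independent spanning trees and hence always inserted into $L$; since $C(G)$ preserves every one of these dominator trees, an edge between a vertex-resilient pair is a strong bridge in $C(G)$ if and only if it is one in $G$. Applying Lemma~\ref{lemma:2vc-resilient-sb} on both graphs then yields the same $2$-vertex-connected blocks.

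The step I expect to be the main obstacle is the second one: controlling the two levels of auxiliary graphs together with the translation between shortcut edges and original edges. One must check that the original edges inserted on behalf of a spanning-tree edge of $H^R(r)$ genuinely induce, in the auxiliary graph recomputed from $C(G)$, the same type-(a) or type-(b) shortcut edge that carried the certifying path, so that the reverse dominator tree of the auxiliary graph of $C(G)$ equals $D_H^R(r)$. Lemma~\ref{lemma:blocks-auxiliary-paths} and its path correspondence are the right tools, but making this correspondence exact through both recursion levels, and ensuring no spurious shortcut alters the dominator relation, is the delicate part of the argument.
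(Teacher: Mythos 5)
Your proposal is correct and follows essentially the same route as the paper's proof: both argue that \textsf{FastVRB} executed on $C(G)$ replicates the execution on $G$ --- the same dominator tree $D(s)$ because $C(G)$ contains two independent spanning trees of $G(s)$, hence the same auxiliary graphs and their reverse dominator trees, hence the same strongly connected components in Step 3.5 --- and both settle the $2$-vertex-connected case by observing that maximal edge-disjointness of the independent spanning trees makes $C(G)$ preserve the strong bridges, so that Lemma~\ref{lemma:2vc-resilient-sb} (equivalently Corollary~\ref{cor:2vc-resilient}) transfers. The only differences are minor: your initial subgraph-containment inclusion is subsumed by the algorithm-replication argument, and the shortcut-edge translation you flag as the delicate step (rightly handled by the sandwich argument between the lifted spanning trees and the auxiliary graph of $G$) is precisely the point the paper's own proof passes over without comment.
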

\begin{proof}
We first argue that the execution of Algorithm \textsf{FastVRB} on $C(G)$ and produces the same vertex-resilient blocks as the execution of Algorithm \textsf{FastVRB} on $G$. The correctness of Algorithm \textsf{FastVRB} implies that it produces the same result regardless of the choice of start vertex $s$. So we assume that both executions choose the same start vertex $s$. We will refer to the execution of Algorithm \textsf{FastVRB} with input $G$ (resp. $C(G)$) as \textsf{FastVRB}$(G)$ (resp. \textsf{FastVRB}$(C(G))$).

First we note that $C(G)$ is strongly connected since it contains a spanning tree of $G(s)$ and a spanning tree for the reverse of each auxiliary graph $G_r$. Moreover, the fact that $C(G)$ contains two independent spanning trees of $G$ implies that $G$ and $C(G)$ have the same dominator tree with respect to the start vertex $s$ that are computed in Step 1. Hence, the auxiliary graphs computed in Step 2 of Algorithm \textsf{FastVRB} have the same sets of ordinary and auxiliary vertices in both executions \textsf{FastVRB}$(G)$ and \textsf{FastVRB}$(C(G))$. Hence, Step 3.1 computes the same dominator trees $D_H(r)$ and  $D^R_H(r)$ in both executions, and therefore Steps 3.2 and 3.3 compute the same blocks.
The same argument as in Steps 1 and 2 implies that both executions \textsf{FastVRB}$(G)$ and \textsf{FastVRB}$(C(G))$ compute in Step 3.4 auxiliary graphs $H_q^R$ with the same sets of ordinary and auxiliary vertices.
Finally, by construction, the strongly connected components of each auxiliary graph $H^R_q\setminus q$ are the same in both executions of \textsf{FastVRB}$(G)$ and \textsf{FastVRB}$(C(G))$.

We conclude that \textsf{FastVRB}$(G)$ and \textsf{FastVRB}$(C(G))$ compute the same vertex-resilient blocks as claimed. Next, observe that since the independent spanning trees computed in Steps 1 and 3.1 of the extended version of  \textsf{FastVRB} are maximally edge-disjoint, $C(G)$ maintains the same strong bridges as $G$. Then, by Corollary \ref{cor:2vc-resilient}, $C(G)$ also has the same $2$-vertex-connected blocks as $G$.
\end{proof}

\section{Concluding remarks }
\label{section:concluding}

We presented the first linear-time algorithms for computing the vertex-resilient and the $2$-vertex-connected relations among the vertices of a digraph.
We showed how to represent these relations with a data structure of $O(n)$ size, so that it is straightforward to check in constant time if any two vertices are vertex-resilient or $2$-vertex-connected.
Moreover, if the answer to such a query is negative, then we can provide a witness of this fact in constant time, i.e., a vertex (strong articulation point) or an edge (strong bridge) of $G$ that separates the two query vertices.
An experimental study of the algorithms described in this paper is presented in \cite{LuigiGILP15}, where it is shown that they perform very well in practice on very large graphs (with millions of vertices and edges).
We leave as an open question if the $2$-edge-connected or the $2$-vertex-connected components of a digraph can be computed faster than $O(n^2)$.


\end{document}